\title{Syntax Monads for the Working Formal Metatheorist}
\author{
  Lawrence Dunn\institute{University of Pennsylvania\\Philadelphia, USA}
  \email{dunnla@seas.upenn.edu}
  \and
  Val Tannen\institute{University of Pennsylvania\\Philadelphia, USA}
  \email{val@cis.upenn.edu}
  \and
  Steve Zdancewic\institute{University of Pennsylvania\\Philadelphia, USA}
  \email{stevez@cis.upenn.edu}
}
\newcommand{\titlerunning}{Syntax Monads for the Working Metatheorist}
\newcommand{\authorrunning}{L. Dunn, V. Tannen \& S. Zdancewic}
\definecolor{teared}{RGB}  {214,  39, 40}
\definecolor{teablue}{RGB} { 31, 119, 180}
\definecolor{teagreen}{RGB}{ 44, 160,  44}
\definecolor{hired}{RGB}  {220,  38, 127}
\definecolor{hiblue}{RGB} {100, 143, 255}
\definecolor{hiyellow}{RGB}{ 255, 176,  0}
\newcommand{\colorMonad}{hiblue}
\newcommand{\colorComonad}{hired}
\newcommand{\colorApplicative}{hiyellow}
\tikzstyle{MonadNode}=   [draw=\colorMonad,   fill=\colorMonad,   shape=circle]
\tikzstyle{ComonadNode}= [draw=\colorComonad, fill=\colorComonad, shape=circle]
\tikzstyle{DecMonNode}=  [draw=\colorMonad,   fill=\colorComonad, line width=1.4pt, shape=circle]
\tikzstyle{DecPlainNode}=[draw=black,         fill=\colorComonad, line width=1.4pt, shape=circle]
\tikzstyle{Identity}=       [-, draw=black, dashed,     line width=1.4pt]
\tikzstyle{Plain}=          [-, draw=black,             line width=1.4pt]
\tikzstyle{MonadWire}=      [-, draw=\colorMonad,       line width=1.4pt]
\tikzstyle{DecorationWire}= [-, draw=\colorComonad,     line width=1.4pt]
\tikzstyle{ApplicativeWire}=[-, draw=\colorApplicative, line width=1.4pt]
\tikzstyle{Box}=            [fill=white, draw=black, shape=rectangle, line width=1pt]
\tikzstyle{Two Input Box}=  [fill=white, draw=black, shape=rectangle, minimum height=20pt, line width=1pt]
\tikzstyle{Three Input Box}=[fill=white, draw=black, shape=rectangle, minimum height=80pt, line width=1pt]
\tikzstyle{Equals}=         [fill=none,  draw=none, font={\Huge}, shape=circle]
\tikzstyle{Def}=            [fill=none,  draw=none, font={\Large}, shape=circle]
\tikzstyle{Label}=          [fill=white, draw=none, shape=circle, font={\large}]
\tikzstyle{Frame}=[-, line width=1.4pt, draw=none, dashed]
\tikzstyle{Highlight}=[-, draw=black, dashed]
\newtheorem{theorem}{Theorem}[section]
\newtheorem{lemma}[theorem]{Lemma}
\newtheorem{definition}[theorem]{Definition}
\newtheorem{example}[theorem]{Example}
\numberwithin{equation}{section}
\newcommand{\sdiagram}[2]{\scalebox{#1}{\tikzfig{#2}}}
\newcommand{\id}[1]{\mathrm{id}_{\msub{#1}}}
\newcommand{\coq}[1]{{\normalfont \texttt{#1}}} 
\newcommand{\coqlist}{\coq{list}}
\newcommand{\smallcoq}[1]{{\texttt{\scriptsize #1}}} 
\newcommand{\idfun}{\mathds{1}}
\newcommand{\comp}{\cdot}
\newcommand{\fn}[1]{{#1}}
\newcommand{\SigmaLambda}[1]{\Sigma^\lambda_{#1}}
\newif\ifshowsubs
\newcommand{\maybesubscript}[1]{\ifshowsubs{#1} \else {} \fi}
\newcommand{\msub}[1]{\maybesubscript{#1}}
\newcommand{\extr}[2]{\mathrm{extr}^{#1}_{\msub{#2}}}
\newcommand{\del}[2]{\mathrm{del}^{#1}_{\msub{#2}}}
\newcommand{\dup}[2]{\mathrm{dup}^{#1}_{\msub{#2}}}
\newcommand{\strength}[3]{\mathrm{st}^{#1}_{#2\msub{, #3}}}
\newcommand{\map}[2]{\mathrm{map}^{#1}_{\msub{#2}}}
\newcommand{\ret}[2]{\mathrm{ret}^{#1}_{\msub{#2}}}
\newcommand{\join}[2]{\mathrm{join}^{#1}_{\msub{#2}}}
\newcommand{\dec}[2]{\mathrm{dec}^{#1}_{\msub{#2}}}
\newcommand{\dist}[3]{\mathrm{dist}^{#1}_{#2\msub{, #3}}}
\newcommand{\pure}[2]{\mathrm{pure}^{#1}_{\msub{#2}}}
\newcommand{\bind}[2]{\mathrm{bind}^{#1}_{\msub{#2}}}
\newcommand{\binddt}[3]{\mathrm{binddt}^{#1}_{#2\msub{, #3}}}
\newcommand{\Set}{\mathbf{Set}}
\newcommand{\EndSet}{\mathbf{End}_{\mathrm{Set}}}
\newcommand{\Dec}[1]{\mathbf{Dec}_{#1}}
\newcommand{\Trav}{\mathbf{Trav}}
\newcommand{\DecTrav}[1]{\mathbf{DecTrav}_{#1}}
\newcommand{\funF}{\mathrm{F}}
\newcommand{\funG}{\mathrm{G}}
\newcommand{\funT}{\mathrm{T}}
\newcommand{\typA}{\mathrm{A}}
\newcommand{\typB}{\mathrm{B}}
\newcommand{\typE}{\mathrm{E}}
\newcommand{\typW}{\mathrm{W}}
\newcommand{\writer}{\typW \times}
\newcommand{\writerLong}{\typW \times -}
\newcommand{\writerShort}{\typW^\times}
\newcommand{\envShort}{\typE^\times}
\newcommand{\term}[1]{\coq{term}_{#1}}
\newcommand{\App}{\coq{App}}
\newcommand{\Abs}{\coq{Lam}}
\newcommand{\Var}{\coq{Var}}
\newcommand{\fvar}{\coq{fvar}}
\newcommand{\bvar}{\coq{bvar}}
\newcommand{\LNvar}{\mathbb{A} + \mathbb{N}}
\newcommand{\LC}{\mathrm{LC}}
\newcommand{\LCloc}{\mathrm{LC}_{\mathrm{loc}}}
\newcommand{\open}{\mathrm{open}}
\newcommand{\openloc}{\mathrm{open}_{\mathrm{loc}}}
\newcommand{\subst}{\mathrm{subst}}
\newcommand{\substloc}{\mathrm{subst}_{\mathrm{loc}}}
\newcommand{\axiomMonadUnitLeft}{%
\join{\funT}{\typA} \comp \ret{\funT}{\funT \typA} = %
\id{\funT \typA}%
}
\newcommand{\axiomMonadUnitRight}{%
\join{\funT}{\typA} \comp \map{\funT}{} \left(\ret{\funT}{\typA}\right) = %
\id{\funT\typA}%
}
\newcommand{\axiomMonadUnitAll}{%
\join{\funT}{\typA} \comp \ret{\funT}{\funT \typA} = %
\id{\funT\typA} =%
\join{\funT}{\typA} \comp \map{\funT}{} \left(\ret{\funT}{\typA}\right)%
}
\newcommand{\axiomMonadAssoc}{%
\join{\funT}{\typA} \comp \join{\funT}{\funT\typA} = %
\join{\funT}{\typA} \comp \map{\funT}{} \left(\join{\funT}{\typA}\right)%
}
\newcommand{\axiomdecorationextract}{%
  \map{\funT}{\writerShort{\typA}, \typA} \extr{\writer}{\typA} \comp \dec{\funT}{\typA} =%
  \id{\funT\typA}%
}
\newcommand{\axiomdecorationduplicate}{%
  \dec{\funT}{\writer \typA} \comp \dec{\funT}{\typA} =%
  \map{\funT}{\writer{\typA}, \writer\writer{\writer{\typA}}} \dup{\writer}{\typA} \comp \dec{\funT}{\typA}%
}
\newcommand{\axiomDecorationRet}{%
  \dec{\funT}{\typA} \comp \ret{\funT}{\typA} = %
  \ret{\funT}{\writer\typA} \comp \ret{\writer}{\typA}%
}
\newcommand{\axiomDecorationJoin}{%
  \dec{\funT}{\typA} \comp \join{\funT}{\typA} = %
  \join{\funT \comp \writerShort}{}%
  \comp \dec{\funT}{\funT(\writer\typA)}%
  \comp \map{\funT}{\funT \typA, \funT (\writer{\typA}) } \left(\dec{\funT}{\typA}\right)%
}
\newcommand{\axiomDistUnit}{%
\dist{\funT}{\idfun}{\typA} = \id{\funT\typA}%
}
\newcommand{\axiomDistComp}{%
\dist{\funT}{\funF \comp \funG}{\typA} = %
\map{\funF}{} \left(\dist{\funT}{\funG}{\typA}\right) \comp \dist{\funT}{\funF}{\funG\typA}%
}
\newcommand{\axiomDistMorphism}{%
\dist{\funT}{\funG}{\typA} \comp \map{\funT}{} \left(\phi_{\typA}\right) =%
\phi_\typA \comp \dist{\funT}{\funF}{\typA}
}
\newcommand{\axiomDistRet}{%
\dist{\funT}{\funF}{\typA} \comp \ret{\funT}{\funF \typA} = %
\map{\funF}{} \left(\ret{\funT}{\typA}\right)%
}
\newcommand{\axiomDistJoin}{%
\dist{\funT}{\funF}{\typA} \comp \join{\funT}{\funF \typA} =%
\map{\funF}{} \left(\join{\funT}{\typA}\right) \comp \dist{\funT}{\funF}{\funT\typA} \comp \map{\funT}{} \left(\dist{\funT}{\funF}{\typA}\right)
}
\newcommand{\traversableMorphism}{%
\dist{\funT_2}{\funF}{\typA} \comp \psi_{\msub{\funF \typA}} =%
\map{\funF}{} \left(\psi_{}\right) \comp \dist{\funT_1}{\funF}{\typA}%
}
\newcommand{\axiomDecorationTraverse}{%
\map{\funF}{} \left(\dec{\funT}{\typA}\right) \comp \dist{\funT}{\funF}{\typA} =%
\dist{\funT}{\funF}{\writerShort\typA} \comp \map{\funT}{} \left(\dist{\writerShort}{\funF}{\typA}\right) \comp \dec{\funT}{\funF\typA}%
}
\newcommand{\defDecoratedMorphism}{%
\dec{\funT_2}{} \comp \phi =%
\phi \comp \dec{\funT_1}{}%
}
\newcommand{\defDecoratedId}{%
\dec{\idfun}{} =%
\ret{\writerShort}{}%
}
\newcommand{\defDecoratedComp}{%
\dec{\funT_1 \comp \funT_2}{} =%
\map{\funT_1}{} \left(\map{\funT_2}{}\, \left(\join{\writerShort}{}\right) \comp \strength{\funT_2}{\typW}{}\right) \comp \dec{\funT_1}{} \comp \map{\funT_1}{} {\dec{\funT_2}{}}{}%
}
\newcommand{\stringdiagramequation}[3]{%
	\noindent\begin{minipage}{.39\linewidth}%
		\centering%
		\sdiagram{0.6}{tikz/#1}%
	\end{minipage}%
	\begin{minipage}{.59\linewidth}%
		\begin{equation}%
			#2 \label{eqn:#3-string}%
		\end{equation}%
	\end{minipage}%
}
\newcommand{\stringdiagramequationParam}[6]{%
	\noindent\begin{minipage}{#1\linewidth}%
		\centering%
		\sdiagram{#3}{tikz/#4}%
	\end{minipage}%
	\begin{minipage}{#2\linewidth}%
		\begin{equation}%
			#5 \label{eqn:#6-string}%
		\end{equation}%
	\end{minipage}%
}
\newcommand{\stringdiagramequationBig}[3]{%
	\noindent\begin{minipage}{1\linewidth}%
		\centering%
		\sdiagram{0.6}{tikz/#1}%
	\end{minipage}\\
	\begin{minipage}{1\linewidth}%
		\centering
		\begin{equation}%
			#2 \label{eqn:#3-string}%
		\end{equation}%
	\end{minipage}%
}
\newcommand{\stringDiagramOperationSide}[2]{%
	\noindent\begin{minipage}{.5\linewidth}%
		\centering%
		\sdiagram{0.8}{tikz/#1}%
	\end{minipage}%
	\begin{minipage}{.5\linewidth}%
		\begin{equation*}%
			#2%
		\end{equation*}%
	\end{minipage}%
}
\begin{document}

\maketitle

\begin{abstract}
  Formally verifying the properties of formal systems using a proof
  assistant requires justifying numerous minor lemmas about
  capture-avoiding substitution. Despite work on category-theoretic
  accounts of syntax and variable binding, \emph{raw, first-order}
  representations of syntax, the kind considered by many practitioners
  and compiler frontends, have received relatively little
  attention. Therefore applications miss out on the benefits of
  category theory, most notably the promise of reusing formalized
  infrastructural lemmas between implementations of different
  systems. Our Coq framework Tealeaves provides libraries of reusable
  infrastructure for a raw, locally nameless representation and can be
  extended to other representations in a modular fashion. In this
  paper we give a string-diagrammatic account of \emph{decorated
  traversable monads} (DTMs), the key abstraction implemented by
  Tealeaves. We define DTMs as monoids of structured endofunctors
  before proving a representation theorem à la Kleisli, yielding a
  recursion combinator for finitary tree-like datatypes.
\end{abstract}

\section{Introduction}
\label{introduction}

Machine-certified proofs of the properties of programming languages,
type theories, and other formal systems are increasingly critical for
establishing confidence in the design and implementation of computer
systems. Much of this reasoning is overtly concerned with the
manipulation of syntactical structures, especially variable-binding
constructs, making the representation of these structures a key issue
in formal metatheory \cite{2005:poplmark}. As implementations scale in
complexity to realistic formalizations of compilers \cite{2012:vellvm}
and programming languages \cite{2014:cakeml}, often with many kinds of
variables, the bookkeeping required to manipulate variables correctly
becomes nearly prohibitive.

Category-theoretic accounts of syntax with variable binding
(e.g. \cite{1994:bh-formal-monads, 1999:abstractsyntax,
  2008:fiore-soas, 2022:fiore-metatheory-soas,
  2022:ahrens-category-typed-syntax}) offer the tantalizing benefit of
formalizing tedious syntax ``infrastructure'' once and for all over an
abstract choice of signature, instead of repeating this effort for the
particular syntax of each new system. However, the kind of syntax
usually considered by theorists---often intrinsically well-typed with
well-scoped de Bruijn indices---is different from what many working
semanticists and compilers actually implement. Consequently, the
benefits of a principled categorical framework are not yet available
to many applications. This work lays the foundations of a
category-theoretic account of variable binding as it often looks in
practice, with the aim of building certified libraries of generic
syntax infrastructure that can be used (and reused) in real-world
applications.

\noindent \textbf{Contributions}. \enspace This manuscript makes two contributions.
\begin{itemize}
\item We introduce the strict monoidal category $\textbf{DecTrav}_{W}$
  of decorated-traversable endofunctors on $\Set$ for some monoid $W$
  (Definition \ref{def:dectrav}) and define decorated-traversable
  monads (DTMs) as monoids in this category (Definition
  \ref{def:dtm}). Examples of decorated-traversable functors come from
  the signature functors of languages with variable binding; the free
  monads they generate are DTMs. These structures admit a
  string-diagram calculus, which aids in equational reasoning.
\item We prove an equivalence (Theorem \ref{thm:dtmtokleisli}) between
  monoids in $\textbf{DecTrav}_{W}$ and a Kleisli-style presentation
  (Definition \ref{def:dtmkleisli}) that describes a
  structured recursion combinator for abstract syntax trees.
\end{itemize}

As with ordinary (strong) monads \cite{1988:moggi-monads}, the Kleisli
presentation is of more immediate utility from a functional
programming or formal metatheory perspective, in part because the
definition requires checking fewer axioms. In a previous,
tool-oriented paper \cite{2023:tealeaves} we introduced
Kleisli-presented DTMs and used them to derive generic syntax
infrastructure for first-order representations of variable binding in
Coq. However, that paper did not explain why the seemingly ad-hoc
equational axioms should be considered ``correct.'' This paper
justifies the robustness of the axioms by proving their equivalence
with a more clearly principled, string-diagrammatic set of axioms. The
results in this paper have been formalized in Coq and are available in
our GitHub
repository.\footnote{\url{https://github.com/dunnl/tealeaves}}

\noindent \textbf{Layout}. \enspace The rest is laid out as
follows. Section \ref{sec:fo-syntax} contains background on
first-order representations of variable binding. We recall that
abstract syntax trees, parameterized by the data in the leaves,
naturally give rise to a (free) monad.  For such monads, the Kleisli
axiomatization provides a theory of na\"ive substitution, but this is
not expressive enough to define the \emph{capture-avoiding}
substitution operations considered by different representations of
variables. Section \ref{sec:decorated-traversable-functors} introduces
the endofunctor categories $\Dec{W}$, $\Trav$, and
$\DecTrav{W}$. Section \ref{sec:kleisli} derives a Kleisli-style
characterization of monoids in $\DecTrav{W}$ and explains why this
abstraction solves the problems identified in Section
\ref{sec:fo-syntax}. Section \ref{sec:related} contrasts our approach
with related work. Section \ref{sec:conclusion} concludes.

Functors in this paper have type $\Set \to \Set$ and typically
represent parameterized container types like lists, binary trees, and
abstract syntax trees. We recall that $\EndSet$ is the strict monoidal
category whose objects are endofunctors on $\Set$, whose arrows are
natural transformations, and whose tensor product is given by
composition of functors.

\section{First-order Representations of Variable Binding}
\label{sec:fo-syntax}

The modern formal metatheorist has many options for representing and
manipulating terms with variable binding in a proof assistant. The
first choice is whether to employ a \emph{first-order} or
\emph{higher-order} approach. Higher-order strategies represent
variable-binding constructors in the object language as higher-order
functions in the metatheory; this sidesteps thorny issues like
variable capture but does not shed much light on syntax as defined in,
say, a compiler. We are interested in things like verified compilers,
so we consider a first-order approach. This style is also simple,
intuitive, and well supported by general-purpose proof assistants like
Coq \cite{2023:coq}. Theoretically it lends itself to the
theory of initial algebras, the category-theoretic take on structural
recursion \cite{1969:burstall-struct-rec}.

A more or less orthogonal question is whether to consider an
\emph{intrinsic} or \emph{extrinsic} (also called \emph{raw})
representation. For instance, intrinsically well-scoped terms exist in
some context $\Gamma$ and can only mention free variables declared in
$\Gamma$, while instrinsically well-typed terms essentially carry
around their own typing judgment. The raw approach posits that a
single set of terms simply exists, including ones that are ill-formed
and untypable in the formal system. Properties like being well-scoped
in $\Gamma$ are then defined post-hoc as predicates on terms by
structural recursion. We consider an extrinsic representation, though
in future work we could investigate an intrinsic approach.

Finally, one has a choice about how to represent free and bound
variables, i.e. the datatype stored in the \emph{leaves} of syntax
trees. Encoding strategies go by names like \emph{fully named},
\emph{de Bruijn indices}, \emph{de Bruijn levels}, \emph{locally
named}, \emph{locally nameless}, and variations. DTMs capture what is
tree-like about syntax without saying anything about the type of data
in the leaves, and for now we shall remain agnostic about this
choice. Figure \ref{fig:lam-syntax} displays a first-order definition
of the set of raw lambda terms. The only unusual part of this
definition is that we parameterize the set of terms by a
representation of variables $V$ and binder annotations $B$. These
parameters will be fixed by a variable encoding strategy in Section
\ref{subsec:variable-encodings}.

\noindent
\begin{minipage}[b]{0.53\linewidth}
\vspace{-0.5\baselineskip}
  \begin{figure}[H]
\begin{verbatim}
Inductive term (B V : Set) : Set :=
  | Var: V -> term B V
  | App: term B V -> term B V -> term B V
  | Lam: B -> term B V -> term B V.
\end{verbatim}
\caption{Syntax of the lambda calculus in Coq}
\label{fig:lam-syntax}
  \end{figure}
\end{minipage}
\begin{minipage}[b]{0.45\linewidth}
\vspace{-0.5\baselineskip}
  \begin{figure}[H]
    \[\arraycolsep=1pt
    \begin{array}{lcl}
      \bind{}{}\,f\,\left(\Var\,v \right)     &=& f v \\
      \bind{}{}\,f\,\left(\App\,t_1\,t_2\right) &=& \App\,\left(\bind{}{}\,f\,t_1\right) \left(\bind{}{}\,f\,t_2\right) \\
      \bind{}{}\,f\,\left(\Abs\,b\,t\right)     &=& \Abs\,b\,\left(\bind{}{}\,f\,t\right) \\
    \end{array}
    \]
    \caption{$\mathrm{bind}$ instance for \coq{term}}
    \label{fig:lam-bind}
  \end{figure}
\end{minipage}

To concentrate on $\term{}$ as functor in $V$, we shall typeset $B$ as
a subscript. Associated to the lambda calculus is a signature functor
\[\SigmaLambda{B} X \overset{def}{=} X \times X + B \times X\]
encoding the domain of the two constructors of $\term{}$ besides
\coq{Var}.  $\term{B} V$ is defined as the least fixpoint $\mu
X. \left(V + \SigmaLambda{B} X\right)$, i.e. as the smallest
solution to the following equation:
\[
\coq{term}_B\, V \simeq V + \coq{term}_B\, V \times \coq{term}_B\, V + B \times \coq{term}_B\, V.
\]
It is well known that, by its least fixed point construction, a
datatype like $\term{B}$ (for any $B$) naturally forms a monad. We
present monads string-diagrammatically alongside a conventional
equational presentation. A general introduction to string diagrams is
outside the scope of this paper, but the interested reader may consult
\cite{fong2018seven, hinzemarsden2023}. In this paper our calculus
depicts a monad $T$ with a blue wire.

\begin{definition}
  \label{def:monad}
  \showsubsfalse
  A \emph{monad} $T$ is a functor equipped with two natural transformations

  \begin{minipage}[b]{.1\linewidth}%
  	\centering
    \sdiagram{0.6}{tikz/monad/ops/unit}
  \end{minipage}%
  \begin{minipage}[b]{.25\linewidth}%
    \begin{equation*}%
      \ret{\funT}{} : \forall \left(A: \Set\right), A \to T A
    \end{equation*}%
  \end{minipage}%
  \begin{minipage}[b]{.2\linewidth}%
  	\centering
    \sdiagram{0.6}{tikz/monad/ops/multiplication}
  \end{minipage}%
  \begin{minipage}[b]{.35\linewidth}%
    \begin{equation*}%
      \join{\funT}{} : \forall \left(A : \Set\right), T (T A) \to T A
    \end{equation*}%
  \end{minipage}%

  subject to the following laws.

  \stringdiagramequation{monad/axioms/identity_l}{\axiomMonadUnitLeft}{monad-id-l}

  \stringdiagramequation{monad/axioms/identity_r}{\axiomMonadUnitRight}{monad-id-r}

  \stringdiagramequation{monad/axioms/associativity}{\axiomMonadAssoc}{monad-assoc}
\end{definition}

The $\ret{\funT}{}$ operation constructs a tree from a single
leaf---for $\term{}$ this is the $\Var$ constructor. $\join{\funT}{}$
flattens a tree-of-trees into a tree by grafting the layers
together. $\map{\funT}{}$ applies a function to each of the
leaves. This presentation is visually pleasing, but fairly abstruse
for our purposes. For applications, the following definition is more
pragmatic.

\begin{definition}
  \label{def:kleislimonad}
  A \emph{Kleisli-presented monad} $T : \Set \to \Set$ is a
  set-forming operation equipped with two polymorphic operations
  \[
  \begin{array}{rcl}
    \ret{}{}  &:& \forall \left(A : \Set\right),\ A \to T A \\
    \bind{}{} &:& \forall \left(A\ B : \Set\right),\ \left(A \to T B\right) \to T A \to T B
  \end{array}
  \]
  subject to the following three laws (implicitly universally quantified over all relevant variables).

  \noindent\begin{minipage}[t]{0.38\linewidth}
  \vspace{-\baselineskip}
    \begin{gather}
      \bind{}{}\ \ret{}{}\ = \id{}   \label{eqn:bind-1} \\
      \bind{}{}\ f \comp \ret{}{} = f \label{eqn:bind-2}
    \end{gather}
  \end{minipage}
  \begin{minipage}[t]{0.6\linewidth}
  \vspace{-\baselineskip}
    \begin{gather}
      \bind{}{}\ g \comp \bind{}{}\ f = \bind{}{}\ \left(\bind{}{}\ g \comp f\right) \label{eqn:bind-3} \\ \notag
    \end{gather}
  \end{minipage}
\end{definition}

The equivalence of these definitions is well-known \cite{Manes1976}.
\begin{lemma}[Manes, 1976]
  \label{lem:monad-to-kleisli}
  Definitions \ref{def:monad} and \ref{def:kleislimonad} are
  equivalent.
\end{lemma}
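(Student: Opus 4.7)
The plan is to establish the equivalence by giving mutually inverse translations between the two structures and verifying in each direction that the defining axioms are preserved. In the forward direction, from a classical monad $(\funT, \ret{\funT}{}, \join{\funT}{})$, keep $\ret{\funT}{}$ as given and define $\bind{\funT}{\typA, \typB} f \overset{def}{=} \join{\funT}{\typB} \comp \map{\funT}{} f$. Then \eqref{eqn:bind-1} reduces to the right unit axiom \eqref{eqn:monad-id-r-string}; \eqref{eqn:bind-2} follows from naturality of $\ret{\funT}{}$ together with the left unit axiom \eqref{eqn:monad-id-l-string}; and \eqref{eqn:bind-3} is obtained by expanding $\bind{\funT}{}$ into $\join{\funT}{}$ and $\map{\funT}{}$ and then invoking associativity \eqref{eqn:monad-assoc-string}, naturality of $\join{\funT}{}$, and functoriality of $\map{\funT}{}$.

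In the reverse direction, starting from a Kleisli-presented monad $(\funT, \ret{\funT}{}, \bind{\funT}{})$, I must first equip $\funT$ with a functor structure, since none is assumed. Set $\map{\funT}{\typA, \typB} f \overset{def}{=} \bind{\funT}{} (\ret{\funT}{\typB} \comp f)$ and $\join{\funT}{\typA} \overset{def}{=} \bind{\funT}{} \id{\funT\typA}$. The identity functor law is immediate from \eqref{eqn:bind-1}; the composition law for $\map{\funT}{}$ uses \eqref{eqn:bind-3} to collapse the two binds and then \eqref{eqn:bind-2} to simplify the inner composite. Naturality of $\ret{\funT}{}$ and $\join{\funT}{}$, the two monad unit laws, and associativity of $\join{\funT}{}$ all follow by analogous routine applications of the three Kleisli axioms.

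Finally, one verifies that the two translations are mutually inverse. Going classical-to-Kleisli-to-classical recovers $\map{\funT}{} f$ via $\bind{\funT}{}(\ret{\funT}{} \comp f) = \join{\funT}{} \comp \map{\funT}{}\ret{\funT}{} \comp \map{\funT}{} f = \map{\funT}{} f$ by the right unit law, and recovers $\join{\funT}{}$ since $\bind{\funT}{}\id{} = \join{\funT}{} \comp \map{\funT}{}\id{} = \join{\funT}{}$. The opposite composite recovers $\bind{\funT}{} f$ by \eqref{eqn:bind-3} followed by \eqref{eqn:bind-2}. The main obstacle is the reverse direction: the Kleisli data includes no functor structure, so $\map{\funT}{}$ must be manufactured from $\bind{\funT}{}$ and then shown to be functorial, natural with respect to $\ret{\funT}{}$ and $\join{\funT}{}$, and compatible with all three monad laws of Definition \ref{def:monad}. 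No individual verification is deep, but correctly orchestrating the three Kleisli axioms through all of them is the genuinely tedious bookkeeping in the proof.
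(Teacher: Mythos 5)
Your proof is correct and is precisely the standard Manes/Kleisli-triple equivalence that the paper itself does not spell out but simply defers to by citing Manes (1976): the translations $\bind{\funT}{}\, f = \join{\funT}{} \comp \map{\funT}{}\, f$ in one direction and $\map{\funT}{}\, f = \bind{\funT}{}\left(\ret{\funT}{} \comp f\right)$, $\join{\funT}{} = \bind{\funT}{}\, \id{}$ in the other, together with the axiom checks and roundtrip verifications, are exactly the expected argument. Nothing is missing.
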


Figure \ref{fig:lam-bind} gives the $\bind{}{}$ instance for
$\term{}$. We note that $\bind{}{}\ f\ t$ merely applies $f$ to each
variable occurrence in $t$, replacing it with a subterm. We call this
simple replacement operation a \emph{na\"ive}
substitution. \eqref{eqn:bind-1} stipulates that replacing all
variables with themselves yields the original $t$. \eqref{eqn:bind-2}
is the definition of $\bind{}{}$ on $\Var$. \eqref{eqn:bind-3} governs
the composition of multiple substitutions. The limitations of this
na\"ive notion of substitution become apparent when we turn our
attention to situations involving both free and bound variables.

\subsection{Variable Encodings}
\label{subsec:variable-encodings}
We discuss two exemplary techniques for representing variables.

\textbf{Fully named} \quad A fully named approach assigns names,
represented as atoms $a \in \mathbb{A}$, to both free and bound
variables, hence $V = \mathbb{A}$. Variable-binding constructs are
labeled with the names they introduce, so $B = \mathbb{A}$. The set
$\term{\mathbb{A}} \mathbb{A}$ corresponds to the following
pen-and-paper syntax of lambda terms:
\[t ::= a | t t | \lambda a . t\]
Consider the main axiom of lambda calculus, the beta conversion rule
$\left(\lambda x . t_1\right) t_2 =_\beta t_1 \{t_2 / x\}$, where $t_1
\{t_2/x\}$ stands for the capture-avoiding substitution of $t_2$ in
place of free occurrences of $x$ in $t_1$. For instance:
\[
\left(\lambda x . x z\right) \{z/x\} =_\beta \lambda x . x z \quad
\left(\lambda y . xz\right) \{z/x\} =_\beta \lambda y . z z \quad
\left(\lambda z . x z\right) \{z/x\} =_\beta \lambda y . z y \quad
\]
In the first case, $x$ occurs bound and is not replaced, while in the
second and third cases it occurs free and is replaced with $z$. In the
last case, $z$ also happens to be the name of the distinct entity
introduced by the $\lambda$, so a na\"ive substitution would
incorrectly result in the term $\lambda z. z z$. Therefore we rename
this entity, and all variables bound to it, to a non-conflicting name,
say $y$. Renaming variables like this complicates a fully named
representation, and it also complicates the theory of DTMs. Therefore
this manuscript focuses on representations that do not require binder
renaming, but see future work in Section \ref{sec:conclusion}.

\textbf{Locally nameless}\quad The locally nameless strategy
represents free variables as atoms, as before, but represents bound
variables as de Bruijn indices \cite{1972:debruijn-dummies}, natural
numbers that describe the ``distance'' from the occurrence to the
abstraction that introduced it, indexing from $0$. For example,
$\lambda x. \lambda y. x y z$ becomes $\lambda \lambda 1 0 z$. Thus
$V$ is the (tagged) union $\mathbb{A} + \mathbb{N}$. For clarity, we
use \fvar{} and \bvar{} as the names of the left and right injections
(respectively) into $V$.

Because the representation of a bound variable is canonical, there is
no need to give arbitrary names to bound variables, hence no need to
rename them to avoid conflicts. Lambda abstractions do not need to be
annotated with names either, which we formally represent by annotating
them with type $B = \mathbf{1} = \{\star\}$, the singleton. This gives
the set $\term{\mathbf{1}} \left(\mathbb{A} + \mathbb{N}\right)$,
corresponding to the following grammar:
\[t ::= a | n | t t | \lambda t\]
A benefit of locally nameless is that substitution of free variables
is particularly simple: a variable is free exactly when it is an atom,
so it can never be mistaken for a bound variable. Due to this special
simplicity, the ``correct" notion of substitution for free variables,
$\subst$ (Figure \ref{fig:subst}), happens to be expressible using
$\bind{}{}$. This operation has the following type, where
$\subst\ x\ u\ t$ replaces $x$ in $t$ with $u$:
\[\subst : \mathbb{A} \to \term{\mathbf{1}} \left(\LNvar\right) \to \term{\mathbf{1}} (\LNvar) \to \term{\mathbf{1}} \left(\LNvar\right)\]
Figure \ref{fig:bind-subst} defines $\subst$ in terms of $\bind{}{}$
and a function $\substloc$ that prescribes the ``local'' effect of
substitution on individual occurrences. Decomposing $\subst$ like this
practical value because $\substloc$ does not depend on the particulars
of $\term{}$, so this definition is given abstractly over a monad
$T$. This also means we can employ the monad laws to reason about it,
exemplified in the following lemma.

\begin{figure}
  \begin{subfigure}[b]{0.52\linewidth}
    \centering
    \begin{gather*}
      \subst\, x\, u\, \left(\Var\, v\right) =
      \begin{cases*}
        u & \textrm{if $v = \fvar\, x$} \\
        \Var\, v & \textrm{else}
      \end{cases*} \\
      \subst\, x\, u\, \left(\App\, t_1\, t_2\right) = \App\, \left(\subst\, x\, u\, t_1 \right)\, \left(\subst\, x\, u\, t_2 \right) \\
      \subst\, x\, u\, \left(\Abs\, \star\, t\right) =  \Abs\, \star\, \left(\subst\, x\, u\, t \right)
    \end{gather*}
    \caption{Structurally recursive definition}
    \label{fig:subst}
  \end{subfigure}
  \begin{subfigure}[b]{0.47\linewidth}
    \begin{gather*}
      \subst\ x\ u\ t = \bind{}{}\ \left(\substloc\ x\ u\right)\ t \label{eqn:subst-bind} \\
      \substloc\ x\ u\ v = \begin{cases*}
        u & \textrm{if $v = \coq{fvar}\ x$} \\
        \ret{}{}\ v & \textrm{else}
      \end{cases*}
      \vspace{\baselineskip}
    \end{gather*}
    \caption{Definition abstract over a choice of monad}
    \label{fig:bind-subst}
  \end{subfigure}
  \caption{Substitution of atoms in a locally nameless representation}
  \label{fig:subst-ln}
\end{figure}

\begin{lemma}
  \label{lem:basic-subst}
  Let $T$ be any monad, let $x \neq y$ be atoms, and let
  $t [x \mapsto u]$ denote $\mathrm{subst}\ x\ u\ t$, defined abstractly in $T$. Substitution has the following
  properties:\footnote{Where $x$ is used as a term, it is understood
  as the atomic term $\ret{}{} \left(\smallcoq{fvar}\ x\right)$. In the
  third equation, the right side mentions the \emph{parallel}
  substitution that simultaneously replaces all $x$ with $u_1 [y \mapsto u_2]$ and $y$ with $u_2$.}
  \[x [x \mapsto t] = t
  \quad
  t [x \mapsto x] = t
  \quad
  t [x \mapsto u_1][y \mapsto u_2] = t \big[x \mapsto u_1 [y \mapsto u_2]; y \mapsto u_2\big]
  \]
\end{lemma}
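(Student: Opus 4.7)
The plan is to reduce each equation to one of the three Kleisli axioms~\eqref{eqn:bind-1}--\eqref{eqn:bind-3}, using that $\subst\,x\,u\,t = \bind{}{}\,(\substloc\,x\,u)\,t$ and that $x$, used as a term, abbreviates $\ret{}{}\,(\fvar\,x)$. The overall strategy is: unfold $\subst$ into $\bind{}{}$, push the computation onto the local function $\substloc$ (which is defined by a simple case split on the leaf), and then apply the appropriate monad law.

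First, for $x[x \mapsto t] = t$, I would unfold the left-hand side as $\bind{}{}\,(\substloc\,x\,t)\,(\ret{}{}\,(\fvar\,x))$. Applying~\eqref{eqn:bind-2} collapses this to $\substloc\,x\,t\,(\fvar\,x)$, which is $t$ by the first clause of $\substloc$. Second, for $t[x \mapsto x] = t$, the natural route is to show that $\substloc\,x\,x$ is extensionally equal to $\ret{}{}$ at the leaf level: on input $\fvar\,x$ it yields $x = \ret{}{}\,(\fvar\,x)$, and on any other leaf $v$ it yields $\ret{}{}\,v$ directly. Given this pointwise equality, the conclusion $\bind{}{}\,(\substloc\,x\,x)\,t = t$ follows from~\eqref{eqn:bind-1} (assuming the usual functional extensionality or working up to pointwise equality, which is standard for abstract monads presented in this style).

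Third, the composition law is the most substantive of the three and the main obstacle, since we must massage the composed $\bind{}{}$ into a single $\bind{}{}$ whose argument matches the stated parallel substitution. I would apply~\eqref{eqn:bind-3} to the left-hand side to get
\[
\bind{}{}\,\bigl(\bind{}{}\,(\substloc\,y\,u_2)\,\comp\,\substloc\,x\,u_1\bigr)\,t,
\]
and then analyze the composed local function $\bind{}{}\,(\substloc\,y\,u_2)\,\comp\,\substloc\,x\,u_1$ on each kind of leaf $v$. When $v = \fvar\,x$, the first function returns $u_1$, and the second returns $\bind{}{}\,(\substloc\,y\,u_2)\,u_1 = u_1[y\mapsto u_2]$. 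When $v = \fvar\,y$ (using $x \neq y$), the first function yields $\ret{}{}\,v$, and~\eqref{eqn:bind-2} reduces the composite to $\substloc\,y\,u_2\,(\fvar\,y) = u_2$. For any other leaf $v$, both functions pass $\ret{}{}\,v$ through unchanged, again by~\eqref{eqn:bind-2}. This case analysis exactly produces the local version of the parallel substitution on the right-hand side, so reassembling with $\bind{}{}$ finishes the proof.

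The only delicate point is handling the hypothesis $x \neq y$ cleanly in the $\fvar\,y$ case (to rule out confusion with the $\fvar\,x$ clause of $\substloc\,x\,u_1$), and to treat the equality of local functions pointwise; neither requires anything beyond the three Kleisli axioms and decidable equality of atoms, both of which are implicitly available.
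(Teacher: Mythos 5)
Your proof is correct and follows exactly the route the paper intends: it proves the lemma abstractly over $T$ by unfolding $\mathrm{subst}$ into $\mathrm{bind}$ and appealing to the three Kleisli axioms \eqref{eqn:bind-1}--\eqref{eqn:bind-3}, with the expected case analysis on leaves for the composition law. The paper gives no further detail than ``easily proven abstractly over $T$ by appealing to equations \eqref{eqn:bind-1}--\eqref{eqn:bind-3},'' and your argument fills in precisely that sketch.
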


Lemma \ref{lem:basic-subst} is easily proven abstractly over $T$ by
appealing to equations \eqref{eqn:bind-1}--\eqref{eqn:bind-3}. On the
other hand, here is a lemma that cannot even be stated, much less
proven, abstractly over $T$:
\begin{lemma}[fresh-subst]
  \label{lem:fresh-subst}
  If an atom $x$ does not occur in $t$, then $t [x \mapsto u] = t$.
\end{lemma}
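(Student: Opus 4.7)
The plan is to prove the lemma by structural induction on $t$, after first giving a precise definition of the occurrence predicate by structural recursion on $\term{}$. Namely, ``$x$ does not occur in $t$'' is read as: when $t = \Var\ v$, then $v \neq \fvar\ x$; when $t = \App\ t_1\ t_2$, then $x$ does not occur in $t_1$ nor in $t_2$; and when $t = \Abs\ b\ t'$, then $x$ does not occur in $t'$. With this reading the proof is a routine induction on the structure of $t$.

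For the base case $t = \Var\ v$, the hypothesis gives $v \neq \fvar\ x$, so by the definition of $\substloc$ (Figure \ref{fig:bind-subst}) we have $\substloc\ x\ u\ v = \ret{}{}\ v$. Then by \eqref{eqn:bind-2}, $t[x \mapsto u] = \bind{}{}\ (\substloc\ x\ u)\ (\Var\ v) = \ret{}{}\ v = \Var\ v = t$. For the application and abstraction cases, I unfold $\bind{}{}$ per Figure \ref{fig:lam-bind} to reduce each goal to the inductive hypotheses applied to the immediate subterms, after which the congruence of each constructor finishes the case.

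The real point---and the main obstacle in a deeper sense---is not the mechanics of the proof but its irreducibility to the monad axioms. Both the statement and the proof intrinsically inspect the structure of $t$: the occurrence predicate is defined by recursion on $\term{}$, and each induction step unfolds $\bind{}{}$ case by case on the constructors. No step can be performed using only $\ret{}{}$, $\bind{}{}$, and equations \eqref{eqn:bind-1}--\eqref{eqn:bind-3}, because a plain monad $T$ offers no means to observe which elements of $A$ appear as leaves of a given inhabitant of $T A$. Remedying this limitation is precisely the role of the traversable and decorated structure introduced in the following sections, which furnishes an abstract analogue of ``occurrence'' and enables such reasoning to be carried out generically over a DTM.
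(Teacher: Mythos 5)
Your proposal is correct and matches the paper's intent exactly: the paper does not spell out a proof but states that Lemma \ref{lem:fresh-subst} ``can of course'' be proven for $\term{}$ in particular by structural recursion (defining occurrence as a structurally recursive predicate), while emphasizing---as you do---that no such argument can be carried out from \eqref{eqn:bind-1}--\eqref{eqn:bind-3} alone, which is the motivation for introducing traversable monads. Your induction, including the base case via $\substloc$ and \eqref{eqn:bind-2} and the congruence steps for $\App$ and $\Abs$, is the argument the paper has in mind.
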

Lemma \ref{lem:fresh-subst} cannot be formulated abstractly because we
lack a mechanism for defining what it means for an atom to
\emph{occur} in a term---occurrence is a predicate, and $\bind{}{}$
does not provide a mechanism for defining predicates. We can of course
prove the lemma for $\term{}{}$ in particular by structural recursion,
but this is no longer generic over a choice of $T$ and cannot be
shared by users formalizing a different syntax. In order to reason
about syntax as a container (of occurrences of variables) like this,
we define traversable monads in Section \ref{subsec:traversals}. This
definition admits a generic proof of Lemma \ref{lem:fresh-subst}.

However, $\subst{}{}$ is not the main operation of locally
nameless. That distinction belongs instead to an operation called
\emph{opening}, defined in Figure \ref{fig:ln-open}. This operation is
used to define $\beta$-reduction, with the $\beta$-conversion rule
taking the form $(\lambda t) u =_\beta t^u$. Here, $t^u$ stands for
the opening of $t$ by $u$, defined by replacing all indices in $t$
previously bound to the outermost $\lambda$ with $u$. (Note that the
replaced variables are actually de Bruijn indices rather than free
variables, hence this is not a substitution of atoms.) Unlike with
atoms, the replaced indices do not have to share a common
representation, as the representation of an index bound to the outer
lambda depends on how many other abstractions are in scope at the
occurrence---both $0$ and $1$ in $\lambda(0\lambda 1)$ point to the
outermost $\lambda$, for instance. Therefore $\open$ is defined with
an auxiliary function that maintains a count of how many binders we
have gone under during recursion. In order to define operations that
maintain an ``accumulator'' argument like this, we introduce decorated
monads in Section \ref{subsec:decorations}.

As a final example, some locally nameless terms, e.g. $\lambda(0 1)$,
do not correspond to ordinary lambda terms because they have indices
(in this example, the $1$) that do not ``point'' to any
abstraction. Therefore one restricts attention to terms that are
\emph{locally closed}, defined in Figure \ref{fig:ln-lc}. Like
$\open$, $\LC$ is defined with a helper function that counts the
number of binders gone under during recursion. Unlike $\open$, $\LC$
computes a boolean ($\mathbf{2} = \{\top, \bot\}$) instead of a
term. To define and reason about $\LC$, one must integrate both
concepts above to define decorated-traversable functors and
DTMs. $\SigmaLambda{B}$ is an example of a decorated-traversable
functor, and $\term{B}$ is a DTM. As we have shown with Tealeaves
\cite{2023:tealeaves}, this abstraction suffices to prove a large
suite of infrastructural lemmas about the operations above.

\begin{figure}
	\begin{subfigure}{1\linewidth}
  \begin{subfigure}[b]{0.55\linewidth}
    \centering
    \begin{gather*}
      \open : \term{\mathbf{1}} (\mathbb{A} + \mathbb{N}) \to \term{\mathbf{1}} (\mathbb{A} + \mathbb{N}) \to \term{\mathbf{1}} (\mathbb{A} + \mathbb{N})\\
      \open\ u\ t = \mathrm{open}_0\ u\ t
    \end{gather*}
  \end{subfigure}
  \begin{subfigure}[b]{0.43\linewidth}
    \centering
    \begin{gather*}
    	\LC : \term{\mathbf{1}} (\mathbb{A} + \mathbb{N}) \to \mathbf{2} \\
    	\LC\ t = \textrm{LC}_0\ t
    \end{gather*}
  \end{subfigure}
  \end{subfigure}
  \begin{subfigure}{1\linewidth}
  	\begin{subfigure}[b]{0.5\linewidth}
    \[
    \begin{array}{ll}
      \mathrm{open}_n\, u\, \left(\Var\, v\right) =
      \begin{cases}
	u & \textrm{if $v = \bvar{}\, n$} \\
        \Var\, v & \mathrm{else}
      \end{cases} \\
      \mathrm{open}_n\, u\, \left(\App\, t_1\, t_2\right) = \App\, \left(\mathrm{open}_n\, u\, t_1\right) \left(\mathrm{open}_n\, u\, t_2\right)\\
      \mathrm{open}_n\, u\, \left(\Abs\, \star\, t\right) =  \Abs\, \star\, \left(\mathrm{open}_{n + 1}\, u\, t\right)
    \end{array}
    \]
    \caption{Opening a lambda term by $u$}
    \label{fig:ln-open}
  \end{subfigure}
  \begin{subfigure}[b]{0.48\linewidth}
  \centering
    \[
    \begin{array}{ll}
      \mathrm{LC}_n \left(\Var\ v\right) =
      \begin{cases}
	\bot & \textrm{if $v = \bvar{}\ m$ and $n \leq m$} \\
        \top & \mathrm{else}
      \end{cases} \\
      \mathrm{LC}_n \left(\App\ t_1\ t_2\right) = \mathrm{LC}_n\ t_1\ \land\ \mathrm{LC}_n\  t_2\\
      \mathrm{LC}_n \left(\Abs\ \star\ t\right) =  \mathrm{LC}_{n + 1}\ t
    \end{array}
    \]
    \caption{Testing for local closure}
    \label{fig:ln-lc}
  \end{subfigure}
  \end{subfigure}
  \caption{Operations on locally nameless terms}
  \label{fig:ln-ops}
\end{figure}

\section{Decorated Traversable Functors}
\label{sec:decorated-traversable-functors}

We introduce decorated and traversable monads separately before
incorporating both to form DTMs. We present definitions
type-theoretically alongside a diagrammatic calculus. For ease of
reading, the different sorts of wires in our graphical calculus, which
play different roles, are typeset with high-contrast colors.

\subsection{Decorations}
\label{subsec:decorations}
The category $\Dec{W}$ (Definition \ref{def:decoratedfunctor}) of
decorated functors is parameterized by some monoid $W$, which we take
as given. In Tealeaves, $W$ is typically the free monoid
$\coqlist\ B$, representing the list of the binders in scope at some
occurrence. In brief, decorated functors arise from the elementary
fact that any monoid $W$ in $\Set$ forms a unique \emph{bi}monoid---a
coherent combination of a monoid and a comonoid on the same set. The
``product-with'' embedding,
\[
X \mapsto \left(X \times -\right) : \Set \to \EndSet
\]
is strong monoidal,\footnote{As opposed to merely lax or oplax
monoidal, not to be confused with tensorial strength.} meaning it
preserves monoids, comonoids, and indeed bimonoids, making
$(\writerLong)$ a \emph{bimonad}.  Decorated functors are precisely
the right comodules of this bimonad, which, by
adapting a construction from abstract algebra (see Section 4.1 of
\cite{2007:bakkehopfalgebras}), form a monoidal
category. This means we can consider monoids of decorated functors, or
decorated monads. Now we step throw this slowly.

As a first step, consider any set $E$. It is an exercise in
definitions to verify that \(E\) is the carrier of exactly one
comonoid, the duplication comonoid over \(E\).  This structure
captures aspects of classical information and its fundamental
operations of duplication and deletion.

\begin{definition} The \emph{duplication comonoid over
$E:\textrm{Set}$} is given by the following operations.
\[
\begin{array}{rclrcl}
\del{}{} &:& E \to \mathbf{1} & \quad \del{}{}\ e &=& \star\\
\Delta &:& E \to E \times E       & \quad \Delta\ e &=& (e, e)
\end{array}
\]
\end{definition}

The duplication comonoid induces a comonad on $\left(E \times -
\right)$ known to functional programmers as the environment or reader
comonad. In this paper, these wires, which we think of as carrying
``contextual'' information, are drawn in red.

\begin{definition} The \emph{environment comonad over
  $E:\textrm{Set}$} is given by the product functor $(E \times -)$
  equipped with the following operations of \emph{extraction} and \emph{duplication}.

  \noindent\begin{minipage}[b]{.38\linewidth}
  \centering
  \sdiagram{0.6}{tikz/comonad/ops/counit}
  \begin{gather}%
    \extr{\typE\times}{} :  \forall \left(A : \Set\right),\ E \times A \to A \notag \\
    \extr{\envShort}{A} (e, a) = a \label{eqn:extr-def}
  \end{gather}%
  \end{minipage}
  \noindent\begin{minipage}[b]{.6\linewidth}
  \centering
  \sdiagram{0.6}{tikz/comonad/ops/comultiplication}
  \begin{gather}
    \dup{\typE\times}{}  :  \forall \left(A : \Set\right),\ E \times A \to E \times \left(E \times A\right) \notag \\
    \dup{\envShort}{A} (e, a) = (e, (e, a)) \label{eqn:dup-def}
  \end{gather}
  \end{minipage}
\end{definition}

The co-Kleisli arrows of the environment comonad have the form $E
\times A \to B$. In functional programming, this comonad captures
computations $A \to B$ that additionally can read, but not modify, an
environment of type $E$, such as a user-supplied configuration
file. This is a classic example of the general intuition that while
monads can be used to structure computations with ``effects'',
comonads represent notions of computation that depend on a
``context'' \cite{2008:comonadic-computation}.

Now consider our monoid \(W = \langle W, \cdot, 1_W\rangle\). The
duplication comonoid exists on the underlying set of $W$, so in
particular $(W \times -)$ is an instance of the reader
comonad. Additionally, mirroring the comonoid structure, the monoid on
\(W\) gives rise to a monad structure on \((W \times -)\) known
variously as the writer or logger monad.

\begin{definition} The \emph{writer monad} over
  $W:\textrm{Set}$ is given by the product functor $(W \times -)$
  equipped with the following operations.

  \noindent\begin{minipage}[t]{.45\linewidth}
  \centering
  \sdiagram{0.6}{tikz/writer/ops/unit}
  \begin{gather}%
    \ret{\typW\times}{} :  \forall \left(A : \Set\right),\ A \to W \times A \notag \\
    \ret{\writer}{A} a = \left(1_W, a\right)
  \end{gather}%
  \end{minipage}
  \noindent\begin{minipage}[t]{.54\linewidth}
  \centering
  \sdiagram{0.6}{tikz/writer/ops/multiplication}
  \begin{gather}
    \join{\typW\times}{}  :  \forall \left(A : \Set\right),\ W \times \left(W \times A\right) \to W \times A \notag \\
    \join{\writer}{A} (w_1, (w_2, a)) = (w_1 \cdot w_2, a)
  \end{gather}
  \end{minipage}
\end{definition}

If one thinks about functors as functional data structures, then
``decorated'' funtors are ones whose elements each occur in a context
of type $W$.

\begin{definition}
  \label{def:decoratedfunctor}
  \showsubsfalse
  A \emph{decorated functor} $T : \Set \to \Set$ is a right coalgebra
  of the writer bimonad $(\writerLong)$. Explicitly, it is a functor
  equipped with a natural transformation

  \stringDiagramOperationSide{decfun/components/comultiplication}{\dec{\funT}{} : \forall \left(A : \Set\right), T A \to T \left(W \times A\right)}
  subject to the following two laws:

  \stringdiagramequationParam{0.39}{.59}{0.55}{decfun/axioms/identity_r}{\axiomdecorationextract}{dec-extract}

  \stringdiagramequationParam{0.39}{.59}{0.55}{decfun/axioms/associativity}{\axiomdecorationduplicate}{dec-assoc}
\end{definition}

Intuitively, \eqref{eqn:dec-extract-string} states that computing the
context of every element and immediately deleting it is the same as
doing nothing. \eqref{eqn:dec-assoc-string} states that computing each
context once and making a copy of it is the same as computing each
context twice.

\begin{example}
  \label{example:lambda-dec}
  The functor $\SigmaLambda{B}$ is decorated by $\coqlist\, B$, the
  free monoid over $B$. The operation is defined as follows (where by
  abuse of notation we give constructors of $\SigmaLambda{}$ the same
  name as corresponding constructors of \coq{term}):
  \[
  \dec{}{X} : \SigmaLambda{B} X \to  \SigmaLambda{B} \left(\coqlist\, B \times X\right)
  \]
  \[
  \begin{array}{lcl}
    \dec{}{} \left(\App\ x_1\ x_2\right) &=& \App\ \left([], x_1\right)\ \left([], x_2\right) \\
    \dec{}{} \left(\Abs\ b\ x\right)  &=& \Abs\ b\ ([b],x)
  \end{array}
  \]
  \emph{Notation}: $[]$ is the empty list, while $[b]$ is a singleton.
\end{example}

The decoration in Example \ref{example:lambda-dec} encodes the policy
determining which constructors act as binders in which arguments. The
policy states that an abstraction $\lambda b . x$ adds $b$ to the
binding context of all occurrences in its body, but applications
contribute nothing to the binding context of variables.

Technically, we have not yet used the monoid structure assumed of
$W$. A related fact is that we have only defined decorated
\emph{functors}, but our term functor $T$ is a monad. How should these
structures be related to each other? The answer comes from the
recognition that decorated functors form a monoidal category much like
$\EndSet$.

\begin{lemma}
  \label{lem:dec-fun-cat}
  \showsubsfalse
  The category $\Dec{W}$ of decorated functors is given by the following data:
  \begin{itemize}
  \item Objects are endofuntors $T: \Set \to \Set$ paired with a decoration
  \item Morphisms are natural transformations $T_1 \Rightarrow T_2$ that commute with the decorations of $T_1$ and $T_2$
    \stringdiagramequationParam{0.5}{.48}{0.55}{decfun/category/hom/hom}{\defDecoratedMorphism}{def-dec-morphism}
  \end{itemize}
\end{lemma}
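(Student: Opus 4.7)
The plan is to verify the three standard category axioms for $\Dec{W}$ as described in the statement: existence of identities, closure of the morphisms under composition, and the associativity and unit laws of composition. Since the objects of $\Dec{W}$ are endofunctors on $\Set$ equipped with extra structure and the morphisms are natural transformations subject to one additional commutation condition, associativity and the unit laws are inherited for free from $\EndSet$ once we know that the class of decorated morphisms contains all identities and is closed under vertical composition of natural transformations. Thus the proof reduces to two routine verifications.

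First, for any decorated functor $(T, \dec{T}{})$, the identity natural transformation $\id{T}: T \Rightarrow T$ is a decorated morphism from $T$ to itself, since the commutation condition collapses to the trivial equation $\dec{T}{} = \dec{T}{}$. String-diagrammatically, no box is introduced and the required square is a degenerate rectangle.

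Next, suppose $\phi : (T_1, \dec{T_1}{}) \to (T_2, \dec{T_2}{})$ and $\psi : (T_2, \dec{T_2}{}) \to (T_3, \dec{T_3}{})$ are decorated morphisms. Their vertical composite $\psi \comp \phi$, defined componentwise by $(\psi \comp \phi)_A = \psi_A \comp \phi_A$, satisfies
\[
\dec{T_3}{} \comp (\psi \comp \phi) = (\dec{T_3}{} \comp \psi) \comp \phi = (\psi \comp \dec{T_2}{}) \comp \phi = \psi \comp (\dec{T_2}{} \comp \phi) = \psi \comp (\phi \comp \dec{T_1}{}) = (\psi \comp \phi) \comp \dec{T_1}{},
\]
using the decorated-morphism hypothesis for $\psi$ first and then for $\phi$. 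String-diagrammatically, this simply slides the decoration node past $\psi$ and then past $\phi$.

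The only potential pitfall is keeping track of whiskerings: the instance of $\phi$ appearing in $\phi \comp \dec{T_1}{}$ is at component $W \times A$ rather than at $A$, but this is already built into the typing of the decorated-morphism condition itself and does not affect the calculation above. I do not anticipate any substantive obstacle; the verification is essentially mechanical, and all category-theoretic coherence is inherited from $\EndSet$.
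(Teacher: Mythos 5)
Your proposal is correct and is exactly the routine verification the paper intends: the paper itself offers no explicit proof, remarking only that ``[t]hat this constitutes a category is clear,'' and your check that identities commute with decorations, that the commutation condition is preserved under vertical composition (with the correct whiskering of $\phi$ at component $W \times A$), and that associativity and unit laws are inherited from $\EndSet$ is precisely that standard argument.
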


That this constitutes a category is clear. Slightly less obvious is
that $\mathbf{Dec}_{W}$ is a strict monoidal category. Like $\EndSet$,
the tensor operation is composition of functors, but we must explain
how to decorate the composition. Likewise, the tensor unit is the
identity functor, whose decoration must also be defined.
\begin{lemma}
  \label{lem:dec-fun-cat-tensor}
  \showsubsfalse
  $\mathbf{Dec}_{W}$ is a monoidal category by the following data:
  \begin{itemize}
  \item The tensor unit is the identity functor paired with the ``null'' decoration

    \stringdiagramequationBig{decfun/category/tensor/definition/unit/identity}{\defDecoratedId}{def-dec-identity}

  \item Tensor product is given by composition of functors, with decorations added monoidally

    \stringdiagramequationBig{decfun/category/tensor/definition/composition/op}{\defDecoratedComp}{def-dec-comp2}

  \end{itemize}
\end{lemma}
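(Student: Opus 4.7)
The plan is to verify that the data exhibited in the statement equip $\mathbf{Dec}_W$ with a strict monoidal structure: namely, (i) the identity functor with null decoration $\ret{\writerShort}{}$ is a decorated functor, (ii) the composite decoration makes $T_1 \cdot T_2$ into a decorated functor whenever $T_1$ and $T_2$ are, (iii) horizontal composition of decorated morphisms is itself a decorated morphism with respect to the composite decoration, and (iv) the proposed tensor and unit satisfy strict associativity and unitality. Since strictness will make all associators and unitors identities, monoidal coherence is then automatic.

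For (i), both coalgebra axioms reduce to writer-monad identities on $W$: the extraction law \eqref{eqn:dec-extract-string} becomes $\extr{\writer}{} \circ \ret{\writer}{} = \id{}$, and the duplication law \eqref{eqn:dec-assoc-string} becomes the assertion that $\dup{\writer}{} \circ \ret{\writer}{}$ agrees with $\map{\writer}{} \dup{\writer}{} \circ \ret{\writer}{}$, both sides sending $a$ to $(1_W,(1_W,a))$ by the unit law of $W$. For (ii), I would verify the two axioms diagrammatically using the definition of $\dec{T_1 \cdot T_2}{}$. The extraction law follows by pushing $\map{T_1 T_2}{} \extr{\writer}{}$ inward: the extraction axiom of $T_2$ collapses the inner $W$-contribution injected by $\strength{T_2}{W}{}$, while naturality and the extraction axiom of $T_1$ dispatch the outer one. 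The duplication law combines the two component duplication axioms and crucially uses the bimonoid structure on $W$---that $\dup{\writer}{}$ is coassociative and coherently compatible with $\join{\writer}{}$---to rearrange the nested $W$-contexts into the shape required on the right-hand side.

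For (iii), given decorated morphisms $\phi_i : T_i \Rightarrow T_i'$, I would unfold the composite decoration on both sides of the required square \eqref{eqn:def-dec-morphism-string} and commute the horizontal composite $\phi_1 \cdot \phi_2$ past the intermediate layers using the decorated-morphism law for each $\phi_i$ together with naturality of $\map{T_1}{}$, $\strength{T_2}{W}{}$, and $\join{\writer}{}$. Strict unitality is then immediate: substituting $\dec{\idfun}{} = \ret{\writerShort}{}$ into the composite formula on either side collapses by the left and right unit laws of the writer monad, $\join{\writer}{} \circ \ret{\writer}{} = \id{} = \join{\writer}{} \circ \map{\writer}{} \ret{\writer}{}$, recovering $\dec{T}{}$ exactly.

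The main obstacle will be strict associativity of the tensor. Expanding $\dec{(T_1 \cdot T_2) \cdot T_3}{}$ and $\dec{T_1 \cdot (T_2 \cdot T_3)}{}$ produces two nested arrangements of $\strength{T_i}{W}{}$, $\dec{T_i}{}$, and $\join{\writer}{}$; I expect both to reduce to a common normal form that applies each $\dec{T_i}{}$ once and then folds the three resulting $W$-contributions together via the iterated monoid product on $W$. The reduction should be driven by the associativity of $\join{\writer}{}$ (i.e.\ of the monoid $W$) together with the interchange of tensorial strengths with multiplication, $\map{T_i}{} \join{\writer}{} \circ \strength{T_i}{W}{} \circ (W \times \strength{T_i}{W}{}) = \strength{T_i}{W}{} \circ (\join{\writer}{} \times \id{T_i})$. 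The string-diagram calculus should make this a planar rewrite, but the bookkeeping around nested strengths and $\map{T_i}{}$ applications is the step that requires the most care.
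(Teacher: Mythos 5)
Your plan is correct, and it is worth noting that the paper does not actually carry out this verification: it simply asserts that the axioms ``are easily verified'' and, earlier in the section, defers the general fact to the construction of comodule categories over a bimonad (Section 4.1 of the cited Hopf-algebra reference), of which $\Dec{W}$ is the instance for the writer bimonad $(\writerLong)$. Your direct diagram chase therefore fills in what the paper leaves implicit, and it isolates exactly the right nontrivial ingredients: the unit object reduces to writer-(co)monad identities; the duplication axiom for $\dec{\funT_1 \comp \funT_2}{}$ needs the bimonoid compatibility of $\Delta$ with $\join{\writer}{}$ (that $\Delta(w_1 \cdot w_2)$ is the componentwise product of $\Delta w_1$ and $\Delta w_2$); strict unitality is the left/right unit law of the writer monad after one naturality step; and strict associativity is driven by associativity of $W$ plus the interchange $\map{\funT}{}\left(\join{\writer}{}\right) \comp \strength{\funT}{\typW}{} \comp \left(\id{} \times \strength{\funT}{\typW}{}\right) = \strength{\funT}{\typW}{} \comp \join{\writer}{}$. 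The general-comodule route buys a one-line proof at the cost of importing the bimonad machinery; your route is longer but self-contained and matches what one would formalize in Coq. Two small slips to fix when you write it out: for the tensor unit, the coassociativity instance is $\ret{\writer}{} \comp \ret{\writer}{} = \dup{\writer}{} \comp \ret{\writer}{}$ (not $\dup{\writer}{}\comp\ret{\writer}{}$ versus a mapped duplicate), though both sides do send $a$ to $(1_W,(1_W,a))$ with no appeal to the unit law needed; and in the extraction check for the composite, the copy of $W$ injected by $\strength{\funT_2}{\typW}{}$ is the \emph{outer} one coming from $\dec{\funT_1}{}$, so it is the strength--counit identity $\map{\funT_2}{}\extr{\writer}{} \comp \strength{\funT_2}{\typW}{} = \extr{\writer}{}$ together with $T_1$'s extraction axiom that removes it, after which $T_2$'s extraction axiom cancels $\map{\funT_1}{}\dec{\funT_2}{}$ --- your attribution of the two axioms is crossed, but the chase goes through either way.
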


Above, ${\showsubsfalse\strength{\funT_2}{\typW}{}}: \forall \left(A :
\Set\right), W \times T_2\, A \to T_2 \left(W \times A\right)$ is the
tensorial strength operation, depicted as crossing a red wire over a
functor. That \eqref{eqn:def-dec-identity-string} and
\eqref{eqn:def-dec-comp2-string} satisfy axioms
\eqref{eqn:dec-extract-string}---\eqref{eqn:dec-assoc-string} is
easily verified, as are the laws governing the tensor operation.

Since \(\Dec{W}\) is a monoidal category, it makes sense to consider
monoids in this category. Such a structure must be both an ordinary
monad and a decorated functor. The new detail is that the monad
operations must also satisfy \eqref{eqn:def-dec-morphism-string},
given the operations defined in Lemma
\ref{lem:dec-fun-cat-tensor}. This yields two additional equations.

\begin{definition}
  \label{def:decoratedmonad}
  \showsubsfalse
  A \emph{decorated monad} is a monoid in $\Dec{W}$. Explicitly, it is
  equipped with the structures of both a decorated functor and a monad
  such that the following equations are also satisfied.

  \noindent\begin{minipage}{.43\linewidth}
  \centering
  \sdiagram{0.55}{tikz/decmon/axioms/unit}
  \begin{equation}
    \dec{\funT}{\typA} \comp \ret{\funT}{\typA}  = \ret{\funT}{\typW\times\typA} \comp \ret{\writerShort}{\typA}
    \label{eqn:dec-ret}
  \end{equation}
  \end{minipage}%
  \begin{minipage}{.55\linewidth}
  \centering
  \sdiagram{0.53}{tikz/decmon/axioms/butterfly}
  \begin{equation}
    \dec{\funT}{\typA} \comp \join{\funT}{\typA} = \join{\funT \comp \writerShort}{} \comp \dec{\funT}{\funT(\typW\times\typA)} \comp \map{\funT}{} \left(\dec{\funT}{\typA}\right)
    \label{eqn:dec-join}
  \end{equation}
  \end{minipage}
  \vspace{0.3cm}
\end{definition}

In \eqref{eqn:dec-join}, $\join{\funT \comp \writerShort}{}$ is an
abbreviation for
\begin{equation*}
  \showsubsfalse
\join{\funT}{} \comp \map{\funT}{} \left(\map{\funT}{}\, \left(\join{\writerShort}{}\right) \comp \strength{\funT}{\typW}{}\right) : \forall (A : \Set), T (W \times T (W \times A)) \to T (W \times A)
\end{equation*} Indeed, this operation is part of a monad instance on $T \comp (W \times -)$.

In the context of syntax metatheory, \eqref{eqn:dec-ret} states that
an atomic term (some $\Var\ x$) has no binders---the context of $x$ is
the monoid unit, typically the empty list or the natural number
$0$. \eqref{eqn:dec-join} governs how decoration behaves when we
compose constructors to form complex syntax trees. It states that the
context of each variable instance is the concatenation of the context
contributed by each constructor. That is, binders accumulate as one
recurses down a syntax tree, as in the recursive operations from
Figure \ref{fig:ln-ops}.

\begin{example}
  \label{example:lam-decorated-monad}
  The monad $\term{B}$ is decorated by $\coqlist\, B$. The operation
  annotates each variable with the list of $B$ values encountered on
  the unique path from root of the syntax tree to the variable
  occurrence. We show examples using fully named and locally nameless
  variables:

  \begin{minipage}[t]{0.46\linewidth}
    \vspace{-\baselineskip}
    \begin{gather*}
      \dec{}{} : \term{\mathbb{A}}\, \mathbb{A} \to \term{\mathbb{A}}\ \left(\coq{list}\, \mathbb{A} \times \mathbb{A}\right) \\
      \lambda x. \lambda y. y x \mapsto \lambda x . \lambda y. ([x, y], y) ([x, y], x) \\
      \left(\lambda x. y \lambda y. z \right) \mapsto \left(\lambda x. ([x], y) \lambda y. ([x, y], z) \right)
    \end{gather*}
  \end{minipage}%
  \begin{minipage}[t]{0.49\linewidth}
    \vspace{-\baselineskip}
    \begin{gather*}
      \dec{}{} : \term{\mathbf{1}}\, \left(\mathbb{A} + \mathbb{N}\right) \to \term{\mathbf{1}}\, (\mathbb{N} \times \left(\mathbb{A} + \mathbb{N}\right)) \\
      \lambda\lambda 01 \mapsto \lambda\lambda (2, 0) (2, 1) \\
      \left(\lambda 0\right) \left(\lambda \lambda 1 \right) \mapsto      \left(\lambda (1, 0)\right) \left(\lambda \lambda (2, 1) \right) \\
    \end{gather*}
  \end{minipage}

  Note that in the locally nameless example we make the implicit
  identification $\coq{list}\, \mathbf{1} \simeq \mathbb{N}$.
\end{example}

The payoff of this definition will be explained after we consider the
separate issue of traversability.

\subsection{Traversals}
\label{subsec:traversals}

Intuitively, a traversable data structure is a finitary container we
can ``iterate'' \cite{2009:essenseiteratorpattern} over, such as a
\coq{list} or \coq{tree} type. McBride and Paterson
\cite{2008:applicativeprogramming} defined traversable functors as
those equipped with a distributive law over applicative functors
(i.e. lax monoidal endofunctors on $\Set$). Subsequent work
\cite{2009:essenseiteratorpattern, 2012:traversals} refined the notion
by supplying an appropriate set of axioms for this operation.

\begin{definition}
  \label{def:applicative}
  An \emph{applicative functor} is a set-forming operation \mbox{$F : \Set \to \Set$} with operations
  \begin{align*}
    \pure{\funF}{} &: \forall \left(A : \Set\right),~A \to F A \\
    (\circledast)^{\funF} &: \forall \left(A\, B : \Set\right),~F \left(A \to B\right) \to F A \to F B
  \end{align*}
  subject to the following equations (note that $\circledast$ is left-associative).

  \noindent\begin{minipage}[t]{0.48\linewidth}
  \vspace{-\baselineskip}
  \begin{gather}
    \pure{}{}\ \id{} \circledast \fn{a} = \fn{a} \label{eqn:applicative-1} \\
    \pure{}{}\ \fn{f} \circledast \pure{}{}\ \fn{a} = \pure{}{}\ \left(f a\right) \label{eqn:applicative-2}
  \end{gather}
  \end{minipage}%
  \begin{minipage}[t]{0.5\linewidth}
    \vspace{-\baselineskip}
    \begin{gather}
      g \circledast \left(f \circledast a\right) = \pure{}{}\ \left(\cdot\right) \circledast g \circledast \fn{f} \circledast \fn{a} \label{eqn:applicative-3} \\
      \fn{f} \circledast \pure{}{}\ \fn{a} = \pure{}{}\ \left(f \mapsto \fn{f} a\right) \circledast \fn{f} \label{eqn:applicative-4}
    \end{gather}
  \end{minipage}
\end{definition}

This class includes the identity functor $\idfun$ and is closed under
composition. An important special case are constant applicatives:
these must map all sets to some monoid $M$, with the operations and
axioms coinciding with those of monoids.

\begin{definition}
  \label{def:applicative-morphism}
  An \emph{applicative morphism} $\phi : F \Rightarrow G$ is a natural
  transformation between applicative functors that commutes with
  $\pure{}{}$ and $\left(\circledast\right)$ in an obvious way.
\end{definition}

Traversable functors are those that distribute over any choice of
applicative functor in a well-behaved way.

\begin{definition}
  \label{def:traversable-functor-strings}
  \showsubsfalse
  A traversable functor is equipped with an operation
  \begin{figure}[H]
    \captionsetup{labelformat=empty} \centering
    \sdiagram{0.8}{tikz/travfun/components/distribute}
    \caption{$\dist{\funT}{}{} : \forall  \left(F : \mathrm{Applicative}\right) \left(A : \Set\right), T (F A) \to F (T A)$}
  \end{figure}
  subject to the following axioms ($\phi$ ranging over applicative morphisms).

  \stringdiagramequationParam{0.44}{.54}{0.55}{travfun/axioms/identity}{\axiomDistUnit}{dist-id}

  \stringdiagramequationParam{0.47}{.52}{0.54}{travfun/axioms/composition}{\axiomDistComp}{dist-compose}

  \stringdiagramequationParam{0.44}{.54}{0.6}{travfun/axioms/homomorphism}{\axiomDistMorphism}{dist-hom}
\end{definition}

The connection between traversability and container-like properties is
best exemplified by choosing $F$ to be a constant functor over a
monoid $M$. Then, the type of ${\showsubsfalse\dist{}{}{}}$ reduces to $T M \to
M$. Intuitively, $T$ contains a finite number of elements, so that
when all elements have type $M$, we can combine them together using
multiplication in $M$. Gibbons and Oliveira
\cite{2009:essenseiteratorpattern} pointed out that
\eqref{eqn:dist-id-string} forbids this operation from ``skipping''
any elements in $T$, while Jaskelioff and Rypacek
\cite{2012:traversals} pointed out that
\eqref{eqn:dist-compose-string} forbids this operation from ``double
counting'' any elements.

Waern \cite{2019:cofree-traversables} defined the monoidal category of
traversable functors. An arrow in this category is a natural
transformation between traversable functors that commutes with
$\dist{}{}{}$ in an obvious way.

\begin{lemma}[Category $\Trav$]
  \label{lem:trav-fun-cat}
  \showsubsfalse
  The category $\Trav$ of traversable functors is given by the following data:
  \begin{itemize}
  \item Objects are endofunctors $T: \Set \to \Set$ paired with a distributive law over applicative functors
  \item Morphisms are natural transformations $\psi : T_1 \Rightarrow T_2$ that commute with distribution.

    \stringdiagramequationParam{0.49}{.49}{0.55}{travfun/category/hom/hom}{\traversableMorphism}{trav-morph}
  \end{itemize}
\end{lemma}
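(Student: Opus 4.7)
The plan is to verify the category axioms for $\Trav$ directly. Since morphisms are natural transformations between endofunctors equipped with a single additional commutativity condition, the associativity and unit laws of composition will be inherited from the ambient category $\EndSet$ as soon as we check that the distinguished class of morphisms is closed under composition and contains identities.

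For identities, given a traversable functor $T$ I take $\psi = \id{\funT}$ in \eqref{eqn:trav-morph-string}. The components of $\id{\funT}$ are identity functions, and $\map{\funF}{}$ preserves identities by the functor laws of $\funF$; both sides of the equation therefore collapse to $\dist{\funT}{\funF}{\typA}$, so the identity natural transformation is a morphism in $\Trav$. For composition, suppose $\psi_1 : T_1 \Rightarrow T_2$ and $\psi_2 : T_2 \Rightarrow T_3$ each commute with their respective distributions. I would paste the two corresponding squares: starting from $\dist{\funT_3}{\funF}{\typA} \comp (\psi_2 \comp \psi_1)_{\funF\typA}$, apply the hypothesis for $\psi_2$ at $\funF\typA$ to rewrite as $\map{\funF}{}(\psi_2) \comp \dist{\funT_2}{\funF}{\typA} \comp (\psi_1)_{\funF\typA}$, then apply the hypothesis for $\psi_1$ to reach $\map{\funF}{}(\psi_2) \comp \map{\funF}{}(\psi_1) \comp \dist{\funT_1}{\funF}{\typA}$. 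Finally, collapse the two applications of $\map{\funF}{}$ using functoriality of $\funF$ to obtain $\map{\funF}{}(\psi_2 \comp \psi_1) \comp \dist{\funT_1}{\funF}{\typA}$, which is precisely the assertion that $\psi_2 \comp \psi_1$ commutes with distribution.

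Associativity and unit laws for $\Trav$ then follow immediately from the corresponding laws for composition of natural transformations in $\EndSet$. No substantive obstacle arises; the only non-trivial ingredient beyond the hypotheses is functoriality of $\funF$, so the verification reduces to a short diagram chase. (Graphically, this is pasting two naturality squares vertically and absorbing the resulting whiskering on the $\funF$-side via $\map{\funF}{}$'s functoriality.)
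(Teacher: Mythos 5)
Your verification is correct and is exactly the routine diagram chase the paper has in mind: the paper itself offers no explicit proof of this lemma (it treats the category axioms as clear, just as it does for $\Dec{W}$), and your check---identities satisfy \eqref{eqn:trav-morph-string} by functoriality of $\funF$, composites satisfy it by pasting the two commuting squares and absorbing $\map{\funF}{}(\psi_2) \comp \map{\funF}{}(\psi_1)$ into $\map{\funF}{}(\psi_2 \comp \psi_1)$, with associativity and units inherited from $\EndSet$---is the standard argument. Nothing further is needed.
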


The identity functor is trivially traversable, and the composition of traversables is traversable just by composing the distributions. Hence, traversable functors $\Trav$ forms a monoidal category. As before, we can consider monoids in this category. These are monads that are also
traversable and whose monad operations satisfy \eqref{eqn:trav-morph-string}.

\begin{definition}
  \label{def:traversable-monad}
  \showsubsfalse A \emph{traversable monad} $T$ is a monoid in
  $\Trav$. Explicitly, $T$ has the structures of both a traversable
  functor and a monad and satisfies the following equations:

  \stringdiagramequation{travmon/axioms/unit}{\axiomDistRet}{dist-ret}

  \stringdiagramequation{travmon/axioms/join}{\axiomDistJoin}{dist-join}
\end{definition}

Though the laws appear opaque, for syntax metatheory,
\eqref{eqn:dist-ret-string} states that a term formed from
$\ret{}{}/\Var$ contains only a single
variable. \eqref{eqn:dist-join-string} implies that substituting a
subterm $u$ for $x$ in $t$ adds the occurrences in $u$ to the set of
occurrences of $t$. This concept is more thoroughly examined in
\cite{2023:tealeaves}.

\subsection{Decorated Traversable Functors}
\label{subsec:decorated-traversable-functors}
For functors that are both traversable and decorated, it is necessary
to impose one more condition relating the decoration and distribution
operations. For the following definition, we note that
$\left(\writerLong\right)$ is uniquely traversable.
\begin{definition}
  \label{def:dtf}
  \showsubsfalse A \emph{decorated-traversable} functor is equipped
  with the structure of both a decorated and traversable functor (Definitions \ref{def:decoratedfunctor} and \ref{def:traversable-functor-strings}),
  subject to the following extra condition:

  \stringdiagramequationBig{dtf/axioms/law}{\axiomDecorationTraverse}{dec-trav}
\end{definition}

\begin{lemma}[Category $\DecTrav{W}$]
  \label{def:dectrav}
  \showsubsfalse
  The strict monoidal category $\DecTrav{W}$ of decorated-traversable functors is given by the following data:
  \begin{itemize}
  \item Objects are decorated traversable functors
  \item Morphisms are natural transformations satisfying both \eqref{eqn:def-dec-morphism-string} and \eqref{eqn:trav-morph-string}.
  \item The tensor product is given by composition of decorated-traversable functors, with the identity functor serving as the tensor unit.
  \end{itemize}
\end{lemma}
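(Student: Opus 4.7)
The plan is to leverage the monoidal category structures already established for $\Dec{W}$ (Lemmas \ref{lem:dec-fun-cat} and \ref{lem:dec-fun-cat-tensor}) and $\Trav$ (Lemma \ref{lem:trav-fun-cat}). Since objects of $\DecTrav{W}$ are objects of both categories satisfying one extra compatibility condition \eqref{eqn:dec-trav-string}, and morphisms satisfy the conjunction of the two side-conditions \eqref{eqn:def-dec-morphism-string} and \eqref{eqn:trav-morph-string}, most of the bookkeeping is already done componentwise and we need only verify the pieces that couple the two structures.

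First I would confirm the category axioms. The identity natural transformation trivially commutes with both $\dec{\funT}{}$ and $\dist{\funT}{}{}$, and the composite of two natural transformations each satisfying \eqref{eqn:def-dec-morphism-string} and \eqref{eqn:trav-morph-string} again satisfies both conditions by a routine pasting argument. Associativity and unitality of composition are inherited from $\EndSet$. Next I would verify that the tensor unit is a well-defined object: the identity functor carries the null decoration $\ret{\writerShort}{}$ from Lemma \ref{lem:dec-fun-cat-tensor} and the trivial distribution $\dist{\idfun}{\funF}{\typA} = \id{\funF\typA}$; checking \eqref{eqn:dec-trav-string} for this pair reduces, after applying \eqref{eqn:dist-id-string}, to verifying that $\ret{\writerShort}{}$ is an applicative morphism of sorts, which follows from the applicative axioms.

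The main step is showing that the tensor product $\funT_1 \comp \funT_2$ of two decorated-traversable functors is again decorated-traversable. The composite already inherits a decoration from Lemma \ref{lem:dec-fun-cat-tensor} and a distribution from the tensor in $\Trav$ (Lemma \ref{lem:trav-fun-cat}, using \eqref{eqn:dist-compose-string} as the composite law). The nontrivial content is verifying that \eqref{eqn:dec-trav-string} holds for this composite structure. I would prove this diagrammatically by starting from the right-hand side of \eqref{eqn:dec-trav-string} for $\funT_1 \comp \funT_2$, expanding the composite decoration using \eqref{eqn:def-dec-comp2-string} and the composite distribution using \eqref{eqn:dist-compose-string}, and then rewriting with the individual DTF laws for $\funT_1$ and $\funT_2$ together with naturality of $\dist{\funT_i}{\funF}{}$ in $\funF$ and of the tensorial strength in its functor argument. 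The main obstacle will be managing the bookkeeping in this paste diagram: the composite decoration involves $\strength{\funT_2}{\typW}{}$ and an extra $\join{\writerShort}{}$, both of which must be slid past distributions using naturality before the two copies of \eqref{eqn:dec-trav-string} can be applied. The string-diagram presentation is what makes this tractable, since the required rewrites amount to isotopies of the wires.

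Finally, strict monoidality follows because associativity and unitality of composition in $\DecTrav{W}$ hold on the nose: the underlying operation is composition of functors, and both the composite decoration \eqref{eqn:def-dec-comp2-string} and the composite distribution associate strictly on the nose since they are built from the strict monoidal structures on $\EndSet$, $\Dec{W}$, and $\Trav$, each of which agrees with functor composition. The coherence conditions for the compatibility axiom \eqref{eqn:dec-trav-string} under reassociation amount to the same diagrammatic isotopies used in the previous step.
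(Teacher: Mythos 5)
The paper itself offers no written proof of this lemma---like the analogous Lemmas \ref{lem:dec-fun-cat}, \ref{lem:dec-fun-cat-tensor}, and \ref{lem:trav-fun-cat}, it is asserted with the verification deferred to the Coq formalization---so there is nothing textual to compare against; judged on its own terms, your outline is correct and identifies the right decomposition. The category axioms and the closure of the morphism condition under composition are indeed inherited componentwise from $\Dec{W}$ and $\Trav$, and the one genuinely new obligation is exactly the one you isolate: that the composite decoration \eqref{eqn:def-dec-comp2-string} and the composite distribution (built from \eqref{eqn:dist-compose-string}) jointly satisfy \eqref{eqn:dec-trav-string}, and that the identity functor with the null decoration \eqref{eqn:def-dec-identity-string} and trivial distribution \eqref{eqn:dist-id-string} does too. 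One point deserves more care than ``slide the strength past distributions using naturality'': the step commuting $\strength{\funT_2}{\typW}{}$ past $\dist{\funT_2}{\funF}{}$ is not bare naturality of $\dist{\funT_2}{\funF}{}$, because after distributing, the $W$-component sits under $F$; what is actually needed is a small auxiliary lemma relating the canonical strength to $\dist{\writerShort}{\funF}{}$ (essentially instance \eqref{eqn:dec-trav-string} for the writer functor itself, which holds because $\writerLong$ is uniquely traversable and its strength is $\map{\funT_2}{}$ of a pairing map). With that lemma in hand the rest of your paste diagram goes through, and your observation that strict monoidality is inherited on the nose from functor composition is correct.
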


\begin{definition}
  \label{def:dtm}
  A \emph{decorated traversable monad} (DTM) is a monoid in $\DecTrav{W}$.
\end{definition}

The force of Definition \ref{def:dtm} is that a DTM is simultaneously
an instance of Definitions \ref{def:decoratedmonad},
\ref{def:traversable-monad}, and \ref{def:dectrav}. A self-contained
summary of the axioms can be found in the appendix.

\section{Kleisli Representation for DTMs}
\label{sec:kleisli}

Definition \ref{def:dtm} is phrased in terms of principled categorical
abstractions, but this is not the most convenient presentation when
working in a theorem prover. Just proving that a syntax forms a DTM is
tedious, requiring five operations and 19 equations. The following
Kleisli-style definition, mirroring Definition \ref{def:kleislimonad},
is more economical and more useful to program with.

\begin{definition}[DTMs, Kleisli-style]
  \label{def:dtmkleisli}
  \showsubsfalse
A \emph{Kleisli-presented DTM} is a set-forming operation $T$ equipped with two operations of the following types
\[
\begin{array}{rclrcl}
\ret{}{} &:&  \forall \left(A : \Set\right),\ A \to T A \\
\binddt{}{}{} &:& \forall \left(F : \mathrm{Applicative}\right) \left(A : \Set\right),\ \left(W \times A \to F (T B)\right) \to T A \to F \left(T B\right)
\end{array}
\]
subject to the following laws (where $\phi$ is quantified over applicative morphisms $\phi : F \Rightarrow G$)
\begin{gather}
  \binddt{}{\idfun}{}\ \left(\ret{\funT}{} \comp \extr{\writerShort}{}\right) = \id{\funT \typA} \label{eqn:binddt-1} \\
  \binddt{}{\funF}{}\ f \comp \ret{\funT}{} = f \comp \ret{\writer}{} \label{eqn:binddt-2} \\
  \map{\funF}{}\ \left(\binddt{}{\funG}{}\ g\right) \comp  \left(\binddt{}{\funF}{}\ f\right) =
  \binddt{}{\funF \comp \funG}{}\ \left( \lambda (w, a) . \map{\funF}{}\ \left(\binddt{}{\funG}{}\ \left(g \odot w\right)\right) f (w, a)\right) \label{eqn:binddt-3} \\
  \phi \comp \binddt{}{\funF}{}\ f = \binddt{}{\funG}\ \left(\phi \comp f\right)  \label{eqn:binddt-4}
\end{gather}
In \eqref{eqn:binddt-3}, $(\odot)$ is defined $(g \odot w_1)\, (w_2, b) \overset{def}{=} g\ (w_1 \cdot w_2, b)$.
\end{definition}

The following theorem speaks to the robustness of Definition \ref{def:dtmkleisli}.

\begin{theorem}
  \label{thm:dtmtokleisli}
  Definitions \ref{def:dtm} and \ref{def:dtmkleisli} are equivalent.
\end{theorem}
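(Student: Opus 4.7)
The plan is to establish translations in both directions between Definitions \ref{def:dtm} and \ref{def:dtmkleisli}, mirroring Manes' proof of Lemma \ref{lem:monad-to-kleisli} but folding in the extra decoration and traversal structure. For the easier direction, given a categorical DTM with operations $(\ret^{\funT}{}, \join^{\funT}{}, \dec^{\funT}{}, \dist^{\funT}{}{})$, I would define
\[
  \binddt{}{\funF}{} f \;=\; \map{\funF}{}(\join^{\funT}{}) \;\comp\; \dist^{\funT}{}{}_{\funF} \;\comp\; \map{\funT}{}(f) \;\comp\; \dec^{\funT}{}
\]
for $f : \writer A \to \funF\, (\funT B)$, and verify \eqref{eqn:binddt-1}--\eqref{eqn:binddt-4} diagrammatically. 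Equation \eqref{eqn:binddt-1} collapses via the decoration counit law \eqref{eqn:dec-extract-string} and the monad unit laws; \eqref{eqn:binddt-2} follows from \eqref{eqn:dec-ret}, \eqref{eqn:dist-ret-string}, and the naturality of $\ret^{\funT}{}$; and \eqref{eqn:binddt-4} is immediate from the applicative-morphism axioms of Definition \ref{def:applicative-morphism} together with \eqref{eqn:dist-hom-string}. Each is essentially a rearrangement of the four-component string diagram above.

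For the reverse direction, from $(\ret^{\funT}{}, \binddt{}{}{})$ I would extract $\map{\funT}{}\, g = \binddt{}{\idfun}{}(\ret^{\funT}{} \comp g \comp \extr{\writerShort}{})$, $\join^{\funT}{} = \binddt{}{\idfun}{}(\extr{\writerShort}{})$, $\dec^{\funT}{} = \binddt{}{\idfun}{}(\ret^{\funT}{})$, and $\dist^{\funT}{}{}_{\funF} = \binddt{}{\funF}{}(\map{\funF}{}(\ret^{\funT}{}) \comp \extr{\writerShort}{})$. All nineteen DTM axioms (monad laws, comonad-style decoration laws, traversal laws, the decorated-monad laws \eqref{eqn:dec-ret}--\eqref{eqn:dec-join}, the traversable-monad laws \eqref{eqn:dist-ret-string}--\eqref{eqn:dist-join-string}, and the decoration-traversal compatibility \eqref{eqn:dec-trav-string}) become specializations of \eqref{eqn:binddt-3}, obtained by choosing $f$ and $g$ among the four extracted operations, and specializing $\funF$ or $\funG$ to $\idfun$ where needed. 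For instance, the decorated-monad law \eqref{eqn:dec-join} falls out of \eqref{eqn:binddt-3} with $f = \ret^{\funT}{}$ and $g = \ret^{\funT}{}$ at $\funF = \funG = \idfun$, while the core traversal law \eqref{eqn:dist-join-string} uses $f = \map{\funF}{}(\ret^{\funT}{}) \comp \extr{\writerShort}{}$ and $g = \ret^{\funT}{}$.

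The main obstacle will be the composition law \eqref{eqn:binddt-3}, specifically the $(\odot w)$ shift on the right-hand side. This is where the monoid multiplication of $W$ enters, and proving it from the categorical axioms requires combining \eqref{eqn:dec-join} (to accumulate the context across the outer and inner layers of $\funT$), \eqref{eqn:dist-join-string} (to commute the outer applicative past the inner join), and crucially the compatibility axiom \eqref{eqn:dec-trav-string}, which is precisely the ingredient that lets the decoration of a composed $\funT$ slide past the distributive law $\dist^{\funT}{}{}_{\funF}$. I expect that once the relevant string diagram is drawn, the calculation amounts to applying these three equations in sequence; the challenge is verifying that the shift $(\odot w)$ at the type level matches the monoidal accumulation produced by $\join^{\writerShort}{}$. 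Finally, I would close the argument by checking the two round-trip identities: starting from the categorical operations and passing through $\binddt{}{}{}$ recovers them by direct substitution into the four extraction formulas, while the converse round-trip reduces to a single instance of \eqref{eqn:binddt-3} witnessing that the formula for $\binddt{}{}{}$ agrees with itself when decomposed into its four extracted parts.
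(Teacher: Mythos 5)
Your construction is essentially the paper's: the same formula \eqref{eqn:binddt-derived} for $\binddt{}{\funF}{}$, the same four extraction formulas for the reverse direction (your $\map{\funF}{}\left(\ret{\funT}{}\right) \comp \extr{\writerShort}{}$ for $\dist{\funT}{\funF}{}$ is just the type-corrected reading of the paper's $\ret{\funT}{} \comp \extr{\writerShort}{}$), and the same diagrammatic derivation of \eqref{eqn:binddt-1}--\eqref{eqn:binddt-4} from the same clusters of axioms, including the observation that \eqref{eqn:binddt-3} is the hard case requiring \eqref{eqn:dec-join}, \eqref{eqn:dist-join-string}, and \eqref{eqn:dec-trav-string} together. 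Two small slips in your sketch of the reverse direction (which the paper itself delegates to the formalization): not every DTM axiom is an instance of \eqref{eqn:binddt-3} alone---the unit laws come from \eqref{eqn:binddt-1}--\eqref{eqn:binddt-2} and the distribution-morphism law from \eqref{eqn:binddt-4}---and your sample instantiations are off, since $f = g = \ret{\funT}{}$ at $\funF = \funG = \idfun$ recovers $\dec{\funT}{} \comp \dec{\funT}{}$, i.e.\ coassociativity \eqref{eqn:dec-assoc-string} rather than \eqref{eqn:dec-join}, which instead needs $f = \extr{\writerShort}{}$ (giving $\join{\funT}{}$) composed with $g = \ret{\funT}{}$ (giving $\dec{\funT}{}$); likewise \eqref{eqn:dist-join-string} takes $f = \extr{\writerShort}{}$ with $g$ the distribution kernel. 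These are correctable bookkeeping errors rather than a structural flaw in the argument.
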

\begin{proof}
  \showsubsfalse
  The $\ret{}{}$ operation is the same for both presentations. Given
  $\map{}{}$, $\join{}{}$, $\dec{}{}$, and ${\dist{}{}{}}$, we define
  $\binddt{}{}{}$ as follows:
  \begin{equation}
    \label{eqn:binddt-derived}
    \binddt{}{\funF}{}\, f \overset{def}{=} \map{\funF}{} \left(\join{\funT}{}\right) \comp \dist{\funT}{\funF}{} \comp \map{\funT}{}\ f \comp \dec{}{}.
  \end{equation}
  Given $\ret{}{}$ and $\binddt{}{}{}$, we define the operations of DTMs
thus:
\[
\begin{array}{rclrcl}
\map{}{} f &\overset{def}{=}& \binddt{}{\idfun}{}     \left(\ret{\funT}{} \comp f \comp \extr{\writer}{}\right) &\quad
\dec{}{}   &\overset{def}{=}& \binddt{}{\idfun}{}     \left(\ret{\funT}{}\right)\\
\join{}{}  &\overset{def}{=}& \binddt{}{\idfun}{}     \left(\extr{\writer}{}\right) &\quad
\dist{}{}{} &\overset{def}{=}& \binddt{}{\funF}{}      \left(\ret{\funT}{} \comp \extr{\writer}{}\right)
\end{array}
\]
Besides verifying these definitions satisfy the appropriate equations,
that starting with either representation and completing a roundtrip
returns the original set of operations. A full proof of this fact can
be found in our GitHub repository. The appendix contains a
string-diagrammatic derivation of
\eqref{eqn:binddt-1}---\eqref{eqn:binddt-4} (Lemma
\ref{lem:dtm-to-kleisli}).
\end{proof}

\begin{example}
  \label{example:lam-binddt}
  \showsubsfalse
  The $\binddt{}{}{}$ operation for $\term{B}{}$ is defined as follows (for any $f : W \times A \to F (T B)$):
  \[
  \begin{array}{lcl}
    \binddt{}{\funF}{}\ f\ \left(\Var\ v \right)        &=& \fn{f} (\mathrm{[ ]}, v) \\
    \binddt{}{\funF}{}\ f\ \left(\App\ t_1\ t_2\right)  &=& \pure{\funF}{}\ \App\ \circledast \binddt{}{\funF}{}\ f\ t_1 \circledast \binddt{}{\funF}{}\ f\ t_2 \\
    \binddt{}{\funF}{}\ f\ \left(\Abs\ b\ t\right)      &=& \pure{\funF}{}\ \left(\Abs\ b\right)\ \circledast \binddt{}{\funF}{}\ \left(f \odot [b]\right)\ t
  \end{array}
  \]
\end{example}

Like $\bind{}{}$, ${\showsubsfalse\binddt{}{}{}}$ can be seen as a
template for defining structurally recursive operations on abstract
syntax trees. However, it is appreciably more expressive, introducing
two new features. First, the first argument of $f$ is now a list of
binders in scope at each variable. Second, the output of $f$ is
wrapped in an applicative functor, and all function application is
replaced with ``idiomatic'' application ($\circledast$). Incorporating
these aspects greatly expands the range of operations we can define
generically.

\subsection{Substitution Metatheory}
\label{subsec:reasoning-about-substitution}
Figure \ref{fig:ln-ops-generic} contains generic versions of the
opening operation and local closure, relating to Figure
\ref{fig:ln-ops} as Figure \ref{fig:bind-subst} does to
\ref{fig:subst}.  The definition of $\LC$ in particular requires full
use of the expressiveness of $\binddt{}{}{}$. Here, $\mathbf{2}$
stands for the constant applicative functor over the monoid $\langle
\mathbf{2}, \land, \top \rangle$, which provides a form of universal
quantification over variables. As instances of
${\showsubsfalse\binddt{}{}{}}$, we can reason about these operations
axiomatically.

\begin{figure}[h]
  \showsubsfalse
\noindent\begin{minipage}{0.52\linewidth}
\begin{gather*}
  \openloc : T \left(\mathbb{A} + \mathbb{N}\right) \to \mathbb{N} \times \left(\mathbb{A} + \mathbb{N}\right) \to \mathrm{T}\left(\mathbb{A} + \mathbb{N}\right) \\
  \begin{align*}
    \openloc\, u\, \left(n, \fvar\, a\right) &= \ret{}{} \left(\fvar\, a\right)\\
    \openloc\, u\, \left(n, \bvar\, m\right) &= \begin{cases*} u & \textrm{if $n = m$} \\  \ret{\funT}{} \left(\bvar\, \mathrm{m}\right) & \textrm{else}  \end{cases*}
  \end{align*}\\
  \open\, u = \binddt{\mathrm{T}}{\idfun}{}\, \left(\openloc\, u\right)
\end{gather*}
\end{minipage}%
\begin{minipage}{0.47\linewidth}
  \begin{gather*}
    \LCloc : \mathbb{N} \times \left(\mathbb{A} + \mathbb{N}\right) \to \mathbf{2} \\
    \begin{align*}
      \LCloc \left(n, \fvar\, a\right) &= \top \\
      \LCloc \left(n, \bvar\, m\right) &=
      \begin{cases}
        \bot & \textrm{if $n \leq m$} \\
        \top & \mathrm{else}
      \end{cases}
    \end{align*}\\
    \LC = \binddt{\mathrm{T}}{\mathbf{2}}{}\ \LCloc
  \end{gather*}
\end{minipage}
\caption{Generic locally nameless operations for a DTM $T$}
\label{fig:ln-ops-generic}
\end{figure}

The adequacy of Definition \ref{def:dtmkleisli} for the needs of
working metatheorists is an empirical question demonstrated by
formalizing generic syntax metatheory with it. For comparison, Weirich
and Aydemir previously introduced LNgen \cite{2010:lngen}, a code
generator that accepts a grammar and synthesizes files containing
locally nameless infrastructure for it in Coq. Using Tealeaves, we
were able to formalize all of the infrastructure lemmas defined in
\cite{2010:lngen}, as well as others, statically and generically over
a choice of arbitrary DTM. We have not found any lemmas of the locally
nameless representation that we cannot prove in this fashion. The
advantage of Tealeaves over LNgen is that our lemmas are proven once
and for all, while LNgen generates proofs specific to a given
signature. Because it relies on heuristics and Ltac \cite{2000:ltac}
(Coq's incompletely specified proof automation language), the authors
have reported in private correspondence that LNgen can fail to prove
some lemmas. Additionally they have reported long compile times which
must be re-endured after any changes to the user's syntax. These
downsides do not apply to Tealeaves because it is a static Coq library
rather than a program. The cost of entry is to furnish a proof of
\eqref{eqn:binddt-1}--\eqref{eqn:binddt-4}, which we hope to automate
in future work.

We have also developed a generalization of DTMs for languages with
multiple sorts of variables, and re-derived the same locally nameless
infrastructure, now extended to reason about operations affecting
different sorts of variables.

\section{Related Work}
\label{sec:related}

Bellegarde and Hook \cite{1994:bh-formal-monads} first considered term
monads in the context of formal metatheory. They defined substitution
for a de Bruijn encoding in terms of a combinator \coq{Ewp} (``extend
with policy'') which is similar in spirit to, but strictly less
expressive than, $\binddt{}{}{}$. Lacking axioms comparable to
\eqref{eqn:binddt-1}--\eqref{eqn:binddt-4}, they were unable to reason
about substitution generically.

Subsequent work has generally considered intrinsically well-scoped
\cite{1999:monadic-lambda} and well-typed \cite{2000:deBruijnNested,
  2005:mcbride, 2017:type-scope-safe-programs} representations using
heterogeneous datatypes \cite{1998:bird-nested}. Leveraging the
metatheory's type system to constrain object terms will tend to lead
to a more dependently-typed style of programming where operations and
their correctness properties are woven together. Building on this line
of work, Ahrens et al. \cite{2022:ahrens-category-typed-syntax} have
recently proposed an intrinisically typed language formalization
framework in Coq. The goal of Tealeaves is to support raw syntax,
which involves defining operations first and reasoning about them post
factum.

Fiore and collaborators \cite{1999:abstractsyntax, 2008:fiore-soas}
have developed a presheaf-theoretic account of syntax. Subsequent work
by Power and Tanaka axiomatized and expanded the presheaf-theoretic
approach \cite{2003:power-unified-binding,
  2008:power-tanaka-typed-signatures}. The basic idea is that
intrinsically scoped terms are stratified by a context---the set of all
contexts is then used as the indexing category for the presheaves. In
our development, syntax is parameterized by types $V$ and $B$ for
representations of variables and binder annotations. These are fixed
by a particular representation strategy (e.g. locally nameless) and
one is left with a single set of terms rather than a presheaf. Fiore
and Szamozvancev have proposed a intrinsically well-scoped,
well-typed, syntax formalization framework in Agda
\cite{2022:fiore-metatheory-soas} which takes inspiration from the
presheaf approach.

Approaches that differ more dramatically from ours include strategies
based on nominal sets \cite{2002:new-abstract-syntax} and variations
of higher-order abstract syntax \cite{1988:hoas, 2008:phoas}.

Besides LNgen, utilities similar in spirit to Tealeaves include GMeta
\cite{2011:GMeta} and Autosubst \cite{2015autosubst, 2019:autosubst2}.
GMeta is a Coq framework for generic raw, first-order syntax. Like
Tealeaves, it is parameterized by a variable encoding strategy. GMeta
resorts to proofs by induction on a universe of representable types,
while Tealeaves is based on a principled equational theory. Autosubst
is an equational framework for reasoning about de Bruijn indices in
Coq based on explicit substitution calculi
\cite{1991:explicit-substitutions, 2015:de-bruijn-algebra-coq}. Our
$\binddt{}{}{}$ can express de Bruijn substitution; it may be
enlightening to consider DTMs vis-à-vis these calculi.

\section{Conclusion and Future Work}
\label{sec:conclusion}

We have presented decorated traversable monads, an enrichment of
monads on the category of sets that can be used to reason equationally
about raw, first-order representations of variable binding.

As presented, DTMs are not equipped with a binder-renaming operation
necessary to implement a fully named binding strategy. A first step in
this direction is to recognize that $\term{}$ is also a functor in $B$
besides $V$, yielding an operation
\[
\mathrm{bmap} : \forall \left(\fn{V}\ \typB_1\ \typB_2 : \Set\right), \left(B_1 \to B_2\right) \to \term{B_1} V \to \term{B_2} V
\]
We are investigating an extension of DTMs that incorporates the
functor instance in $B$. One intended application is to provide a
certified generic translation between a named and locally nameless
representation, which could be used as part of a certified compiler,
for example.

Imposing a distributive law over all applicative functors imposes an
order on variable occurrences, which may be unnecessarily strong. Some
process calculi, for example, feature a notion of parallel composition
$|$ such that formulas $p_1 | p_2$ and $p_2 | p_1$ should be taken as
syntactically identical. To support quotiented syntax, one might
require a distributive law only over commutative applicative functors.

\vfill
\pagebreak

\bibliography{bibliography}
\bibliographystyle{eptcs}

\appendix
\raggedbottom
\pagebreak

\section{Appendix}
\begin{figure}[H]
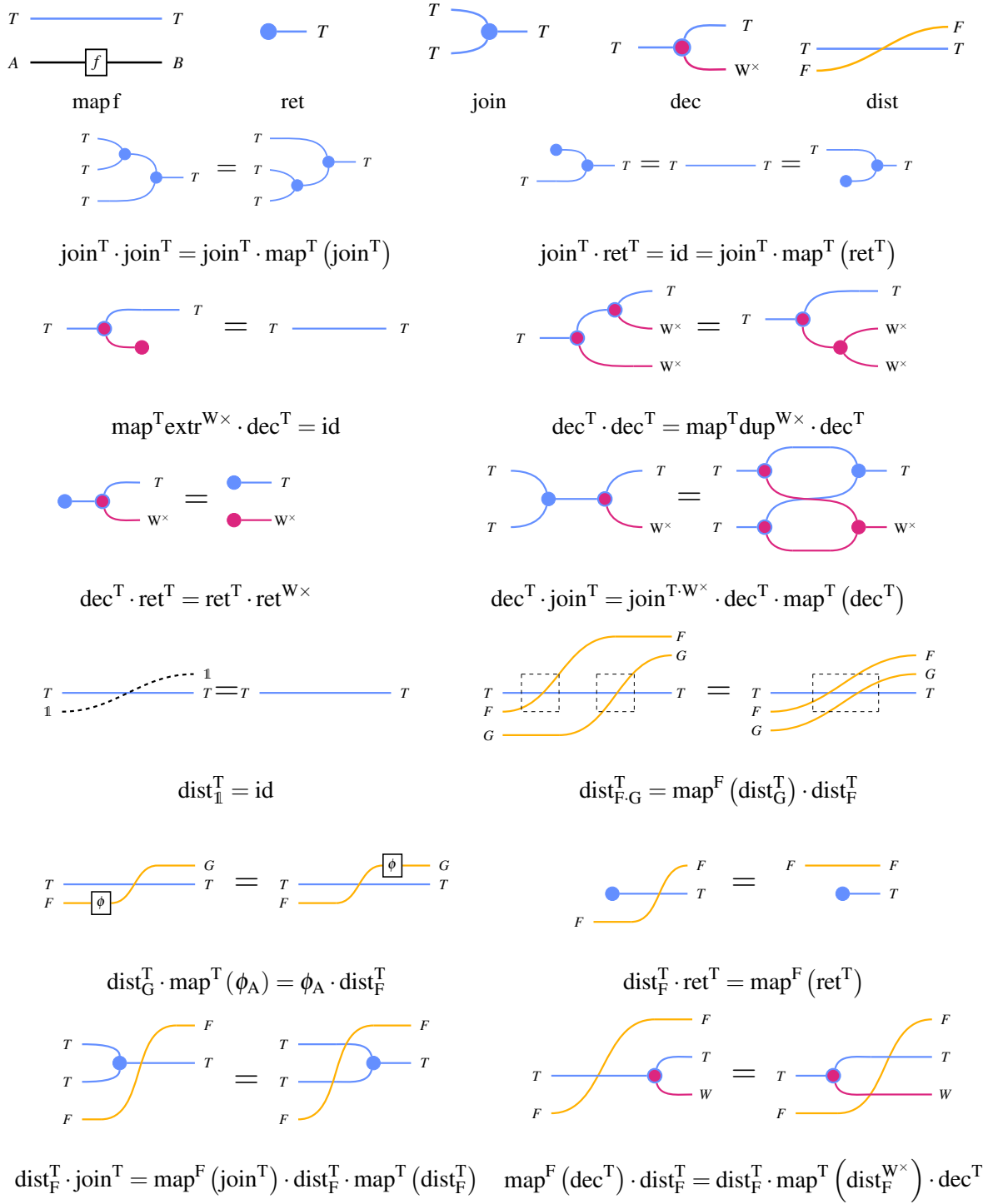

  \showsubsfalse
  \centering
  \begin{subfigure}{1\linewidth}
    \begin{subfigure}{0.19\linewidth}
      \centering
      \captionsetup{labelformat=empty}
      \sdiagram{0.7}{tikz/dtm/misc/ops/map}
      \caption{$\mathrm{map}\, \mathrm{f}$}
    \end{subfigure}
    \begin{subfigure}{0.19\linewidth}
      \centering
      \captionsetup{labelformat=empty}
      \sdiagram{0.7}{tikz/monad/ops/unit}
      \caption{$\mathrm{ret}$}
    \end{subfigure}
    \begin{subfigure}{0.19\linewidth}
      \captionsetup{labelformat=empty}
      \centering
      \sdiagram{0.7}{tikz/monad/ops/multiplication}
      \caption{$\mathrm{join}$}
    \end{subfigure}
    \begin{subfigure}{0.19\linewidth}
      \captionsetup{labelformat=empty}
      \centering
      \sdiagram{0.7}{tikz/dtm/misc/ops/comultiplication}
      \caption{$\mathrm{dec}$}
    \end{subfigure}
    \begin{subfigure}{0.19\linewidth}
      \captionsetup{labelformat=empty}
      \centering
      \sdiagram{0.7}{tikz/dtm/misc/ops/distribute}
      \caption{$\mathrm{dist}$}
    \end{subfigure}
  \end{subfigure}
  \begin{subfigure}{1\linewidth}
    \begin{subfigure}{0.45\linewidth}
      \centering
      \sdiagram{0.5}{tikz/monad/axioms/associativity}
      \[\axiomMonadAssoc\]
    \end{subfigure}
    \begin{subfigure}{0.52\linewidth}
      \centering
      \sdiagram{0.5}{tikz/monad/axioms/identity_all}
      \[\axiomMonadUnitAll\]
    \end{subfigure}
  \end{subfigure}
  \begin{subfigure}{1\linewidth}
    \begin{subfigure}{.45\linewidth}
      \centering
      \sdiagram{0.6}{tikz/dtm/misc/decfun/identity_r}
      \[\axiomdecorationextract\]
    \end{subfigure}
    \begin{subfigure}{.50\linewidth}
      \centering
      \sdiagram{0.6}{tikz/dtm/misc/decfun/associativity}
      \[\axiomdecorationduplicate\]
    \end{subfigure}
  \end{subfigure}
  \begin{subfigure}{1\linewidth}
    \begin{subfigure}{.39\linewidth}
      \centering
      \sdiagram{0.6}{tikz/decmon/axioms/unit}
      \[\axiomDecorationRet\]
    \end{subfigure}
    \begin{subfigure}{.6\linewidth}
      \centering
      \sdiagram{0.6}{tikz/decmon/axioms/butterfly}
      \[\axiomDecorationJoin\]
    \end{subfigure}
  \end{subfigure}
  \begin{subfigure}{1\linewidth}
    \begin{subfigure}{.45\linewidth}
      \centering
      \sdiagram{0.6}{tikz/dtm/misc/trav/identity}
      \[\axiomDistUnit\]
    \end{subfigure}
    \begin{subfigure}{.52\linewidth}
      \centering
      \sdiagram{0.6}{tikz/dtm/misc/trav/composition}
      \[\axiomDistComp\]
    \end{subfigure}
  \end{subfigure}
  \begin{subfigure}{1\linewidth}
    \begin{subfigure}{.49\linewidth}
      \centering
      \sdiagram{0.6}{tikz/dtm/misc/trav/homomorphism}
      \[\axiomDistMorphism\]
    \end{subfigure}
    \begin{subfigure}{.49\linewidth}
      \centering
      \sdiagram{0.6}{tikz/travmon/axioms/unit}
      \[\axiomDistRet\]
    \end{subfigure}
  \end{subfigure}
  \begin{subfigure}{1\linewidth}
    \begin{subfigure}{.49\linewidth}
      \centering
      \sdiagram{0.6}{tikz/travmon/axioms/join}
      \[\axiomDistJoin\]
    \end{subfigure}
    \begin{subfigure}{.49\linewidth}
      \centering
      \sdiagram{0.6}{tikz/dtm/misc/dectrav/law}
      \[\axiomDecorationTraverse\]
    \end{subfigure}
  \end{subfigure}
  \caption{String diagrammatic presentation of DTMs}
\label{fig:dtm-strings}
\end{figure}

\pagebreak
\begin{lemma}
  \label{lem:dtm-to-kleisli}
  \showsubsfalse
  Every DTM gives rise to a Kleisli-presented DTM according to the following definition of $\binddt{}{}{}$.
  \begin{gather*}
    \sdiagram{1}{tikz/dtm/components/binddt} \\
    \binddt{}{\funF}{}\, f = \map{\funF}{} \left(\join{\funT}{}\right) \comp \dist{\funT}{\funF}{} \comp \map{\funT}{}\ f \comp \dec{\funT}{}
  \end{gather*}
\end{lemma}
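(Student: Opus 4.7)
The plan is to verify, one by one, that the derived operation satisfies each of the four Kleisli axioms \eqref{eqn:binddt-1}--\eqref{eqn:binddt-4}, using only the DTM axioms summarised in the appendix (Figure \ref{fig:dtm-strings}). String diagrams are the natural medium: each equation unfolds $\binddt{}{\funF}{}$ into the four-layer pipeline $\map{\funF}{}\join{\funT}{} \comp \dist{\funT}{\funF}{} \comp \map{\funT}{}f \comp \dec{\funT}{}$ and we rewrite by sliding wires past nodes according to naturality and the monad/decoration/traversal axioms.

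The first two axioms are the quick warm-up. For \eqref{eqn:binddt-1} I instantiate $\funF = \idfun$, so that $\map{\idfun}{}$ and $\dist{\funT}{\idfun}{}$ collapse to identities (by \eqref{eqn:dist-id-string}), leaving $\join{\funT}{} \comp \map{\funT}{}(\ret{\funT}{}\comp\extr{\writerShort}{}) \comp \dec{\funT}{}$; functoriality splits the $\map{\funT}{}$, the monad unit law \eqref{eqn:monad-id-r-string} eliminates $\join{\funT}{} \comp \map{\funT}{}\ret{\funT}{}$, and the decoration-extract axiom \eqref{eqn:dec-extract-string} finishes the job. For \eqref{eqn:binddt-2} I precompose with $\ret{\funT}{}$ and push it rightwards: the decoration-ret axiom \eqref{eqn:dec-ret} turns $\dec{\funT}{}\comp\ret{\funT}{}$ into $\ret{\funT}{}\comp\ret{\writerShort}{}$; naturality of $\ret{\funT}{}$ slides it past $\map{\funT}{}f$; the dist-ret axiom \eqref{eqn:dist-ret-string} converts $\dist{\funT}{\funF}{}\comp\ret{\funT}{}$ into $\map{\funF}{}\ret{\funT}{}$; and a final application of \eqref{eqn:monad-id-l} absorbs $\join{\funT}{}\comp\ret{\funT}{}$. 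Axiom \eqref{eqn:binddt-4} is equally quick: naturality of $\phi$ commutes it past $\map{\funF}{}\join{\funT}{}$, the applicative-morphism law \eqref{eqn:dist-hom-string} commutes it past $\dist{\funT}{\funF}{}$, after which we recognise $\binddt{}{\funG}{}(\phi\comp f)$ on the nose.

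The main obstacle is the composition law \eqref{eqn:binddt-3}, which is genuinely where every DTM axiom has to pull its weight. My plan is to expand both sides into their defining pipelines and reduce the left-hand side to the right-hand side by a carefully ordered sequence of rewrites. Starting from $\map{\funF}{}(\binddt{}{\funG}{}g) \comp \binddt{}{\funF}{}f$, I first use functoriality of $\funF$ to distribute $\map{\funF}{}$ over the four layers of the inner $\binddt{}{\funG}{}g$. Then I need to commute the inner decoration past the outer join: this is exactly the content of the decorated-monad axiom \eqref{eqn:dec-join}, which introduces the $\funT\comp\writerShort$ structure and the strength that is packaged inside the definition of $g \odot w$. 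Next I commute distributions past joins using the traversable-monad axiom \eqref{eqn:dist-join-string}, fusing $\funF$ and $\funG$ into $\funF\comp\funG$ via the composition axiom \eqref{eqn:dist-compose-string}. The decoration-traversal compatibility axiom \eqref{eqn:dec-trav} is what lets the remaining decoration slide past the remaining distribution so that the whole pipeline can be re-bracketed in the shape $\map{\funF\comp\funG}{}(\join{\funT}{}) \comp \dist{\funT}{\funF\comp\funG}{} \comp \map{\funT}{}h \comp \dec{\funT}{}$ for the $h$ prescribed on the right-hand side of \eqref{eqn:binddt-3}. The bookkeeping of where the $\writerShort$ wire is duplicated (via the duplication axiom \eqref{eqn:dec-assoc-string}) is what produces the $g\odot w$ pattern: one copy of the context $w$ remains with the outer $f$ while the second copy is pre-composed into $g$.

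I expect the detailed chain for \eqref{eqn:binddt-3} to run to a dozen or so rewrites; the string-diagram presentation keeps it manageable because each axiom is a local topological move on wires, and the only bookkeeping is tracking how the red (decoration) wires get routed through the blue (monad) and yellow (applicative) layers. Once \eqref{eqn:binddt-3} is discharged, the four axioms are established and, together with the roundtrip verification whose details are relegated to the repository, this completes the translation from Definition \ref{def:dtm} to Definition \ref{def:dtmkleisli}.
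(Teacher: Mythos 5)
Your proposal is correct and follows essentially the same route as the paper's own proof: each of the four Kleisli equations is verified diagrammatically by unfolding the derived $\binddt{}{}{}$ and rewriting with the same DTM axioms in essentially the same order (decoration-unit and dist-ret plus the monad unit for the return law; dist-id, dec-extract and the other monad unit for the identity law; the butterfly law, dist-join, the decoration--traversal compatibility, (co)associativity and dist-compose for the composition law; the applicative-morphism axiom for the last). The only caveat is cosmetic: a couple of your citations point at nonexistent labels (the monad left-unit and decoration--traversal laws carry the \texttt{-string} suffix in this paper), and, like the paper, you leave the dozen-step diagram chase for the composition law as a plan rather than writing it out.
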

\begin{proof}
  Proof of Equation \eqref{eqn:binddt-1}:
  \begin{align*}
    & \quad \sdiagram{0.7}{tikz/dtm/kleisli/binddt_ret} \\
    = & \quad \sdiagram{0.7}{tikz/dtm/kleisli/binddt_ret2} && \textrm{\small Apply the decoration cup law \eqref{eqn:dec-ret}.} \\
    = & \quad \sdiagram{0.7}{tikz/dtm/kleisli/binddt_ret3} && \textrm{\small Pull the unit across $F$ \eqref{eqn:dist-ret-string}.} \\
    = & \quad \sdiagram{0.7}{tikz/dtm/kleisli/binddt_ret4} && \textrm{\small Apply the left monad unit law \eqref{eqn:monad-id-l-string}.}
  \end{align*}

  Proof of Equation \eqref{eqn:binddt-2}:
  \begin{align*}
    & \quad \sdiagram{0.7}{tikz/dtm/kleisli/binddt_id1} & \\
    = & \quad \sdiagram{0.7}{tikz/dtm/kleisli/binddt_id2} && \textrm{\small Apply unit and counit laws \eqref{eqn:dec-extract-string} \eqref{eqn:monad-id-r-string}.} \\
    = & \quad \sdiagram{0.7}{tikz/dtm/kleisli/binddt_id3} && \textrm{\small Apply traversal unitary law \eqref{eqn:dist-id-string}.}
  \end{align*}

  Proof of Equation \eqref{eqn:binddt-3}:
  \begin{align*}
    & \quad \sdiagram{0.7}{tikz/dtm/kleisli/binddt_composition_law_proof0}  \\
    = & \quad \sdiagram{0.7}{tikz/dtm/kleisli/binddt_composition_law_proof1} && \textrm{\small Apply the butterfly law \eqref{eqn:dec-join}.} \\
    = & \quad \sdiagram{0.7}{tikz/dtm/kleisli/binddt_composition_law_proof2} && \textrm{\small Drag operations past distributions \eqref{eqn:dist-join-string} \eqref{eqn:dec-trav-string}.} \\
    = & \quad \sdiagram{0.7}{tikz/dtm/kleisli/binddt_composition_law_proof3} && \textrm{\small Apply (co)associativity \eqref{eqn:monad-assoc-string} \eqref{eqn:dec-assoc-string}.} \\
    = & \quad \sdiagram{0.7}{tikz/dtm/kleisli/binddt_composition_law_proof4} && \textrm{\small Apply traversal composition law \eqref{eqn:dist-compose-string}.}
  \end{align*}

  Proof of Equation \eqref{eqn:binddt-4}:
  \begin{align*}
    & \quad \sdiagram{0.7}{tikz/dtm/kleisli/binddt_morph1} \\
    = & \quad \sdiagram{0.7}{tikz/dtm/kleisli/binddt_morph2} && \textrm{\small Slide the applicative morphism \eqref{eqn:dist-hom-string}.}
  \end{align*}
\end{proof}

\end{document}